\documentclass[aps,prd,twocolumn,superscriptaddress,nofootinbib,10pt]{revtex4-2}

\usepackage{amsmath,amssymb,mathrsfs,bm}
\usepackage{graphicx}
\usepackage{xcolor}
\usepackage{amsthm}
\usepackage[colorlinks=true,linkcolor=blue,citecolor=blue,urlcolor=blue]{hyperref}
\usepackage{orcidlink}
\hypersetup{
  pdftitle={Beyond the Dolgov--Kawasaki instability: Covariant scalaron stability, entropy transport, and inner horizons in f(R) gravity},
  pdfauthor={Francisco S. N. Lobo and Francisco Tello-Ortiz}
}

\newtheorem{proposition}{Proposition}
\newtheorem{theorem}{Theorem}
\newtheorem{corollary}{Corollary}
\newtheorem{remark}{Remark}

\newcommand{\Lie}{\mathcal{L}}
\newcommand{\U}{\mathcal{U}}
\newcommand{\Msq}{m_{\rm sc}^{2}}
\newcommand{\Sx}{\mathfrak{S}_{\times}}
\newcommand{\Jx}{\mathcal{J}_{\ell|n}}
\newcommand{\Th}{\Theta}

\begin{document}

\title{\texorpdfstring{Beyond the Dolgov--Kawasaki instability:\\
Covariant scalaron stability, entropy transport, and inner horizons in $f(R)$ gravity}{Beyond the Dolgov--Kawasaki instability: Covariant scalaron stability, entropy transport, and inner horizons in f(R) gravity}}

\author{Francisco S. N. Lobo\orcidlink{0000-0002-9388-8373}}
\email{fslobo@ciencias.ulisboa.pt}
\affiliation{Instituto de Astrof\'isica e Ci\^encias do Espa\c{c}o, Faculdade de Ci\^encias da Universidade de Lisboa, Edif\'icio C8, Campo Grande, P-1749-016 Lisbon, Portugal}
\affiliation{Departamento de F\'isica, Faculdade de Ci\^encias da Universidade de Lisboa, Edif\'icio C8, Campo Grande, P-1749-016 Lisbon, Portugal}

\author{Francisco Tello-Ortiz\orcidlink{0000-0002-7104-5746}}
\email{francisco.tello@ufrontera.cl}
\affiliation{Departamento de Ciencias F\'isicas, Universidad de La Frontera, Casilla 54-D, 4811186 Temuco, Chile}

\begin{abstract}
{The Dolgov--Kawasaki instability tests the local scalaron response in metric $f(R)$ gravity, but does not determine strong-field horizon dynamics.  We separate these questions using $F=f_R$ and the Legendre potential $\mathcal U(F)=FR-f$.  The scalaron restoring coefficient is $m_{\rm sc}^2=(F-Rf_{RR})/(3f_{RR})$, whereas exact spherical Wald-area transport identities cancel the scalaron Hessian.  Their Einstein-frame interpretation distinguishes longitudinal kinetic focusing from potential-dependent cross-focusing.  We prove that, at fixed constant-curvature background with tracefree matter, transport depends only on $(f,f_R)$, while the scalaron mass depends on the independently adjustable $f_{RR}$.  Reissner--Nordstr\"om in $f(R)=R+\alpha R^2$ gives an exact discriminator: identical background transport coexists with opposite scalaron mass signs.  A charged solution with varying $F$ and a cosmological check test the identities beyond this constant-$F$ sector.  For a nonzero massive scalaron tail transmitted to a nondegenerate Cauchy horizon, a conditional fixed-background estimate gives a nonintegrable contribution to generalized entropy focusing.  This estimate does not establish nonlinear mass inflation or a particular extendibility class.  The results provide a layered framework for coupling regularity, background transport, scalaron stability, and nonlinear inner-horizon evolution.}
\end{abstract}

\maketitle

\section{Introduction}
\label{sec:intro}

Metric $f(R)$ gravity provides one of the simplest settings in which the Einstein--Hilbert action acquires an additional propagating gravitational degree of freedom.  The same simplicity that makes the theory attractive also exposes a fundamental viability question.  Dolgov and Kawasaki showed that the prototype inverse-curvature model develops an extremely rapid curvature instability in ordinary matter, despite having been designed to modify gravity only at cosmological scales~\cite{Carroll2004,NojiriOdintsov2003,DolgovKawasaki2003}.  The modern interpretation identifies the unstable mode as the scalaron, $F\equiv f_R$, and relates the high-curvature general-relativistic limit to the familiar condition $f_{RR}>0$~\cite{Faraoni2006,SawickiHu2007,Chiba2003,SotiriouFaraoni2010,DeFeliceTsujikawa2010}.  More generally, however, the local restoring coefficient is not simply the sign of $f_{RR}$; it depends on the background curvature through $(F-Rf_{RR})/(3f_{RR})$.

This observation is especially important in strong-field systems.  Cosmological fixed points, chameleon screening, relativistic stars, black-hole perturbations, and dynamical matter configurations probe different regions of the scalaron potential and impose conditions beyond the elementary DK sign test~\cite{AmendolaEtAl2007,HuSawicki2007,Starobinsky2007,Frolov2008,KobayashiMaeda2008,KobayashiMaeda2009,UpadhyeHu2009,BabichevLanglois2010,SeifertWald2007,Seifert2007,MyungMoonSon2011,ArbuzovaDolgovReverberi2013,SagunskiEtAl2018}.  A positive instantaneous scalaron mass is therefore an important local diagnostic, but not a theorem of global mode stability on an arbitrary evolving geometry.

A second, logically different, layer of strong-field physics is geometric transport.  Raychaudhuri's equation and trapping-horizon methods organize null focusing and marginal-surface evolution directly through curvature projections~\cite{Raychaudhuri1955,Hayward1994}.  In metric $f(R)$ gravity the stationary Wald entropy carries the factor $F=f_R$~\cite{Wald1993,IyerWald1994}, and recent work on dynamical black-hole entropy has sharpened the role of the generalized expansion $\Theta=\theta+k(\ln F)$ and of generalized apparent horizons~\cite{HollandsWaldZhang2024,Kong2025,JiaQiGao2026,Qi2026}.  {For Lagrangians polynomial in $R$, the increase of this entropy was established long ago by mapping the theory to its conformally related Einstein frame~\cite{JacobsonKangMyers1995}; the transport identities derived below are the local, spherically symmetric Jordan-frame counterparts of that argument.}  Complementary static analyses have likewise shown that scalaron Hessian terms modify null convergence and radial monotonicity in a way that is distinct from perturbative scalaron stability~\cite{MunozPalmaLobo2026}{, and the area-weighted double-null cross-focusing law, together with the obstructions it implies for regular inner marginal horizons, was obtained in Ref.~\cite{MunozPalmaHR2026}}.  These developments make it natural to ask whether geometric or entropy transport can diagnose the dynamical health of the scalaron.

Inner horizons make the distinction unavoidable.  Reissner--Nordstr\"om and Kerr possess Cauchy horizons in their exact stationary geometries, but perturbations are infinitely blueshifted there and generically activate the mass-inflation mechanism~\cite{PoissonIsrael1989,PoissonIsrael1990,Dafermos2003,LukOh2019}.  Double-null simulations show that mass inflation persists in charged $f(R)$ collapse even for model choices that avoid the elementary cosmological scalaron pathology~\cite{HwangLeeYeom2011}.  {In the companion analysis~\cite{LoboMunozTello2026}, we derived an exact Hawking-mass transport law and, under nondegenerate-horizon, source-dominance, and non-cancelling asymptotic assumptions, classified leading scalaron cancellation of a Price-tail source.  Cancelling its leading term is weaker than making the residual source blueshift-integrable; under the stated transverse and trace assumptions it instead forces a restricted high-curvature rigidity branch unless an independent escape channel intervenes.}  The present paper addresses the complementary, logically prior question: what can local scalaron stability itself tell us about null transport and inner-horizon stability?

Our answer is that these are distinct layers of information.  We formulate the scalaron dynamics using the Legendre potential $\mathcal U(F)=FR-f$ and derive the exact linearized trace equation on an arbitrary background.  In spherical symmetry the constant-$F$ sector reduces to a characteristic scalar wave equation on the two-dimensional normal geometry.  Independently, the generalized Wald-area variable $Fr^2$ satisfies exact null-transport identities in which the second derivatives of $F$ cancel.  The resulting \emph{potential-jet hierarchy} separates the value $\mathcal U$, the background relation $\mathcal U_{,F}=R$, and the scalaron response $\mathcal U_{,FF}=1/f_{RR}$.

The logical separation becomes exact in quadratic electrovacuum gravity.  Reissner--Nordstr\"om is a solution of $f(R)=R+\alpha R^2$ for every $\alpha$: the geometry, the null projections, the Wald-area transport, and the outer--inner horizon pair are identical, while the scalaron mass $m_{\rm sc}^2=1/(6\alpha)$ changes sign.  The same background therefore supports a non-tachyonic exterior scalaron for $\alpha>0$ and a tachyonic scalaron for $\alpha<0$.  Conversely, within one and the same geometry the transverse transport discriminator changes sign between the outer and inner horizons while the scalaron mass is constant {(its horizon value is $2\kappa_\pm/r_\pm$, proportional to the signed surface gravity and unchanged throughout this fixed-background family)}.  This provides a direct same-background proof that transport and scalaron stability cannot be substituted for one another.  {That general-relativistic solutions with vanishing or constant curvature persist in large classes of $f(R)$ theories, while their perturbations differ, is well known~\cite{Psaltis2008,BarausseSotiriou2008,delaCruzDombriz2009}.  What is new here is the use of this degeneracy as an exact discriminator between transport and stability, and its sharpening into a one-jet theorem: on every constant-curvature background with tracefree matter the transport data are fixed by $(f,f_R)$ at the background curvature and the scalaron mass by $f_{RR}$.  We also test the identities on a static charged solution with nonconstant $F$ and derive a conditional blueshift estimate for an excited massive scalaron tail, assuming nonzero transmission to a nondegenerate Cauchy horizon.}

The paper is organized into six sections.  Section~\ref{sec:scalaron} develops the scalaron dynamics, the covariant linear response equation, and its spherical and characteristic reductions.  Section~\ref{sec:transport} derives the null and Wald-area transport laws directly, {gives their Einstein-frame dictionary,} introduces the potential-jet hierarchy, and proves the transport--stability separation theorem.  Section~\ref{sec:RN} gives the exact Reissner--Nordstr\"om discriminator{, generalizes it to a one-jet theorem on constant-curvature backgrounds, and examines a static background with a running scalaron}.  Section~\ref{sec:inner} connects the result to Cauchy horizons and formulates a layered strong-field viability program.  Section~\ref{sec:conclusions} synthesizes the implications and outlines future directions.  {Appendix~\ref{app:FLRW} verifies the transport identities on a cosmological background.}

\section{Scalaron dynamics and covariant response}\label{sec:scalaron}

\subsection{Field equations and trace dynamics}

We consider four-dimensional metric $f(R)$ gravity with minimally coupled matter,
\begin{equation}
S=\frac{1}{16\pi}\int d^4x\sqrt{-g}\,f(R)+S_m,
\label{eq:action}
\end{equation}
using metric signature $(-,+,+,+)${ and geometrized units $G=c=1$ throughout.  Thus $M$, $Q$, and radii have dimensions of length, $R$, $f$, $\U$, and $m_{\rm sc}^2$ of inverse length squared, and $F$ is dimensionless.} Defining $F\equiv f_R\equiv df/dR$, the field equations can be written as
\begin{equation}
F G_{\mu\nu}
=8\pi T_{\mu\nu}
+\nabla_\mu\nabla_\nu F
-g_{\mu\nu}\Box F
+\frac12(f-FR)g_{\mu\nu}.
\label{eq:field}
\end{equation}
Their trace is
\begin{equation}
3\Box F+FR-2f=8\pi T,
\label{eq:trace}
\end{equation}
where $T=g^{\mu\nu}T_{\mu\nu}$.

Whenever $f_{RR}\neq0$ on the branch under consideration, $R$ may be regarded locally as a function of $F$.  Introduce the Legendre potential
\begin{equation}
\U(F)\equiv F R(F)-f(R(F)).
\label{eq:Udef}
\end{equation}
Then
\begin{equation}
\U_{,F}=R,
\qquad
\U_{,FF}=\frac{dR}{dF}=\frac{1}{f_{RR}},
\label{eq:Ujets}
\end{equation}
and the trace equation becomes
\begin{equation}
3\Box F+2\U-F\U_{,F}=8\pi T.
\label{eq:traceU}
\end{equation}
This representation separates the potential value $\U$ from its higher derivatives and will be central below.

We assume $F>0$ in regions where an Einstein-frame interpretation or a Wald-area interpretation is invoked.  This condition gives a positive effective gravitational coupling and ensures that the conformal map $\widetilde g_{\mu\nu}=Fg_{\mu\nu}$ is nondegenerate~\cite{SotiriouFaraoni2010,DeFeliceTsujikawa2010}.  It does not, by itself, guarantee scalaron stability.

\subsection{Covariant local scalaron mass parameter}

Consider first a constant-curvature or sufficiently slowly varying background.  Differentiating the nonlinear restoring term in Eq.~\eqref{eq:traceU} gives
\begin{equation}
\frac{d}{dF}\left(2\U-F\U_{,F}\right)
=\U_{,F}-F\U_{,FF}.
\end{equation}
The corresponding scalaron mass is therefore
\begin{equation}
{
\Msq
=\frac{F\U_{,FF}-\U_{,F}}{3}
=\frac{F-Rf_{RR}}{3f_{RR}}.}
\label{eq:mscalar}
\end{equation}
This is a scalar function of the background and is the appropriate local algebraic mass parameter.  The frequently quoted condition $f_{RR}>0$ is recovered when
\begin{equation}
|R f_{RR}|\ll F,
\label{eq:GRlike}
\end{equation}
so that
\begin{equation}
\Msq\simeq\frac{F}{3f_{RR}}.
\label{eq:DKlimit}
\end{equation}
For $F\simeq1$, Eq.~\eqref{eq:DKlimit} reduces to the standard high-curvature DK criterion~\cite{Faraoni2006,SawickiHu2007}.  Away from this regime, however, $f_{RR}>0$ alone is not sufficient: for $R>0$, one may have $f_{RR}>F/R$ and hence $\Msq<0$.  This is the familiar refinement that enters the stability of curved fixed points~\cite{AmendolaEtAl2007,SotiriouFaraoni2010,DeFeliceTsujikawa2010}.

For the original inverse-curvature model
\begin{equation}
f(R)=R-\frac{\mu^4}{R},
\label{eq:inverseR}
\end{equation}
one has
\begin{equation}
F=1+\frac{\mu^4}{R^2},
\qquad
f_{RR}=-\frac{2\mu^4}{R^3}.
\end{equation}
At positive high curvature,
\begin{equation}
\Msq
=-\frac{R^3}{6\mu^4}-\frac{R}{2}
\simeq-\frac{R^3}{6\mu^4}<0,
\label{eq:DKoriginalmass}
\end{equation}
which is the modern scalaron form of the violent instability identified by Dolgov and Kawasaki~\cite{DolgovKawasaki2003}.

\begin{remark}
{The limit $f_{RR}\rightarrow0$ must be treated separately.  A genuinely linear Lagrangian has no additional scalar degree of freedom.  A zero of $f_{RR}$ within a nonlinear theory, however, makes the local Legendre map singular and need not describe a regular decoupling limit.  In the quadratic family $|\Msq|\rightarrow\infty$ as $\alpha\rightarrow0$; the positive-mass limit and the increasingly rapid tachyonic limit must be distinguished from the exactly general-relativistic theory at $\alpha=0$.}
\end{remark}

\subsection{Covariant response on dynamical backgrounds}
\label{sec:linear}

\subsubsection{Exact linearized trace equation}

Let $(\bar g_{\mu\nu},\bar F,\bar T)$ be an arbitrary solution of Eq.~\eqref{eq:traceU}, and perturb
\begin{equation}
g_{\mu\nu}=\bar g_{\mu\nu}+h_{\mu\nu},
\qquad
F=\bar F+\delta F,
\qquad
T=\bar T+\delta T.
\end{equation}
Linearizing Eq.~\eqref{eq:traceU} gives the exact {first-order} identity
\begin{equation}
{
\left(\bar\Box-\bar m_{\rm sc}^2\right)\delta F
=\frac{8\pi}{3}\delta T
-\delta(\Box)\bar F,}
\label{eq:lineartrace}
\end{equation}
where
\begin{equation}
\bar m_{\rm sc}^2
\equiv
\frac{\bar F-\bar R\,\bar f_{RR}}{3\bar f_{RR}}.
\label{eq:backgroundmass}
\end{equation}
For $h_{\mu\nu}=\delta g_{\mu\nu}$,
\begin{equation}
\delta(\Box)\bar F
=-h^{\mu\nu}\bar\nabla_\mu\bar\nabla_\nu\bar F
-\left(\bar\nabla_\mu h^{\mu\rho}
-\frac12\bar\nabla^\rho h\right)\bar\nabla_\rho\bar F,
\label{eq:deltabox}
\end{equation}
with $h=\bar g^{\mu\nu}h_{\mu\nu}$.

Equation~\eqref{eq:lineartrace} is the covariant extension of the DK response equation to a general background.  It also makes clear why $\bar m_{\rm sc}^2$ is not, by itself, a complete global stability theorem when $\bar F$ varies: the metric perturbation couples through $\delta(\Box)\bar F$, and matter perturbations enter through $\delta T$.  General variational approaches to spherical stability address precisely this coupled problem~\cite{SeifertWald2007,Seifert2007}.  Nevertheless, $\bar m_{\rm sc}^2$ is a covariant local response coefficient and becomes the exact mass parameter of an autonomous scalaron equation on constant-$F$ backgrounds.

\begin{proposition}[Exact decoupling on constant-$F$ backgrounds]
\label{prop:constantF}
If $\bar F$ is constant on the background, then $\delta(\Box)\bar F=0$ and $\delta F$ is gauge invariant under infinitesimal diffeomorphisms.  The scalaron obeys
\begin{equation}
\left(\bar\Box-\bar m_{\rm sc}^2\right)\delta F
=\frac{8\pi}{3}\delta T.
\label{eq:decoupled}
\end{equation}
For tracefree matter perturbations, $\delta T=0$, the scalaron obeys a homogeneous Klein--Gordon equation that contains no metric perturbation.  It may still source the remaining linearized gravitational equations; the decoupling here is therefore one-way and exact at the level of the scalaron equation.
\end{proposition}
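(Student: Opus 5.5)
The proof will come almost entirely from the exact linearized trace identity, Eq.~\eqref{eq:lineartrace}, together with the explicit form of $\delta(\Box)\bar F$ in Eq.~\eqref{eq:deltabox}. If $\bar F$ is constant, then $\bar\nabla_\rho\bar F=0$ and $\bar\nabla_\mu\bar\nabla_\nu\bar F=0$ identically. Both terms in Eq.~\eqref{eq:deltabox} therefore vanish for every $h_{\mu\nu}$, and $\delta(\Box)\bar F=0$.

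Substituting this into Eq.~\eqref{eq:lineartrace} gives Eq.~\eqref{eq:decoupled} at once. For tracefree matter perturbations, $\delta T=0$, we obtain $(\bar\Box-\bar m_{\rm sc}^2)\delta F=0$. The operator $\bar\Box$ is built only from the background metric. The coefficient $\bar m_{\rm sc}^2$ in Eq.~\eqref{eq:backgroundmass} depends only on background quantities $\bar F,\bar R,\bar f_{RR}$, and these are constants here, since $\bar F=f_R(\bar R)$ constant with $f_{RR}\neq0$ forces $\bar R$ constant on the branch. So no $h_{\mu\nu}$ enters, which is the stated decoupling.

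For gauge invariance, recall that under an infinitesimal diffeomorphism generated by $\xi^\mu$ a first-order perturbation of a scalar transforms as $\delta F\to\delta F+\Lie_\xi\bar F=\delta F+\xi^\mu\partial_\mu\bar F$. This shift vanishes when $\bar F$ is constant, which is the Stewart--Walker lemma applied to the scalar $F$. Similarly $h_{\mu\nu}\to h_{\mu\nu}+2\bar\nabla_{(\mu}\xi_{\nu)}$, so consistency requires the right-hand side of Eq.~\eqref{eq:lineartrace} to be gauge invariant. With $\delta(\Box)\bar F\equiv0$ this holds as long as $\delta T$ is gauge invariant, i.e.\ $\Lie_\xi\bar T=0$. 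I will note this point: with constant $\bar R$ and constant $\bar F$, the background trace equation fixes $\bar T=(2\bar f-\bar F\bar R)/(8\pi)$, which is constant. That makes the equation fully gauge consistent.

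Finally, the one-way character. The scalaron equation contains no $h_{\mu\nu}$, but linearizing Eq.~\eqref{eq:field} about constant $\bar F$ gives terms $\delta F\,\bar G_{\mu\nu}$, $\bar\nabla_\mu\bar\nabla_\nu\delta F-\bar g_{\mu\nu}\bar\Box\delta F$, and the variation of the potential term. So $\delta F$ generally sources $\delta G_{\mu\nu}$, and it suffices to exhibit these source terms. I expect no real obstacle. The only delicate step is justifying that $\bar R$ and $\bar T$ are constant, which is needed for the gauge statement, and I would handle it as above via the local invertibility of $R(F)$ on the branch with $f_{RR}\neq0$.
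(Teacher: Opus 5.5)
Your proposal is correct and follows essentially the same route as the paper: every derivative of $\bar F$ in Eq.~\eqref{eq:deltabox} vanishes, the gauge shift $\Lie_\xi\bar F$ is zero, and Eq.~\eqref{eq:decoupled} follows from Eq.~\eqref{eq:lineartrace}. Your observation that $\bar T$ is constant, so that $\delta T$ is gauge invariant too, is the remark the paper makes right after its proof; one harmless slip is that with $\bar\Box\bar F=0$ the trace equation gives $8\pi\bar T=\bar F\bar R-2\bar f$, the opposite sign of what you wrote, although only the constancy of $\bar T$ is used.
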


\begin{proof}
If $\bar F$ is constant, every derivative of $\bar F$ in Eq.~\eqref{eq:deltabox} vanishes.  Moreover, under $x^\mu\rightarrow x^\mu+\xi^\mu$, the scalar perturbation transforms as $\delta F\rightarrow\delta F+\Lie_\xi\bar F=\delta F$.  Equation~\eqref{eq:decoupled} then follows directly from Eq.~\eqref{eq:lineartrace}.
\end{proof}

This proposition includes electrovacuum backgrounds with constant curvature and will be exploited in Sec.~\ref{sec:RN}.  {Two remarks are in order.  A constant $\bar F$ forces, through Eq.~\eqref{eq:traceU}, a constant matter trace $\bar T$, so $\delta T$ is gauge invariant as well and Eq.~\eqref{eq:decoupled} relates gauge-invariant quantities only.  Moreover, neither the proposition nor its proof assumes spherical symmetry; this will matter for the rotating backgrounds discussed in Sec.~\ref{sec:onejet}.}

\subsubsection{Spherical characteristic reduction}

Consider a spherically symmetric background written invariantly as
\begin{equation}
\bar g_{\mu\nu}dx^\mu dx^\nu
=h_{ab}(x)dx^a dx^b+r^2(x)d\Omega^2,
\label{eq:sphericalmetric}
\end{equation}
where $h_{ab}$ is the Lorentzian metric on the two-dimensional normal space and $r$ is the areal radius.  For a constant-$F$ background, decompose
\begin{equation}
\delta F
=\sum_{\ell m}\frac{\chi_{\ell m}(x)}{r(x)}Y_{\ell m}(\vartheta,\varphi).
\label{eq:decompose}
\end{equation}
The reduction can be performed directly on the two-dimensional normal geometry.  For any scalar $\psi(x^a,\vartheta,\varphi)$ on the warped product~\eqref{eq:sphericalmetric},
\begin{equation}
\Box\psi
=D^aD_a\psi
+\frac{2}{r}D^a r\,D_a\psi
+\frac{1}{r^2}\Delta_{S^2}\psi,
\label{eq:scalarbox}
\end{equation}
where $D_a$ is the Levi--Civita derivative of $h_{ab}$ and $\Delta_{S^2}$ is the Laplacian on the unit two-sphere.  Setting
\begin{equation}
\psi=\frac{\chi(x)}{r(x)}Y_{\ell m},
\qquad
\Delta_{S^2}Y_{\ell m}=-\ell(\ell+1)Y_{\ell m},
\end{equation}
one finds, by direct differentiation,
\begin{equation}
r\left[D^aD_a\psi+\frac{2}{r}D^a r\,D_a\psi\right]
={\left[D^aD_a\chi-\frac{D^aD_a r}{r}\chi\right]Y_{\ell m}.}
\label{eq:warpedidentity}
\end{equation}
Substitution into Eq.~\eqref{eq:decoupled} gives
\begin{equation}
{
\left[D^aD_a-\mathcal V_{\ell}(x)\right]\chi_{\ell m}
=\frac{8\pi r}{3}\,\delta T_{\ell m},}
\label{eq:normalwave}
\end{equation}
with the covariant normal-space potential
\begin{equation}
{
\mathcal V_{\ell}
=\bar m_{\rm sc}^2
+\frac{D^aD_a r}{r}
+\frac{\ell(\ell+1)}{r^2}.}
\label{eq:normalpotential}
\end{equation}
Equations~\eqref{eq:scalarbox}--\eqref{eq:normalpotential} make explicit that the scalaron mass, the curvature of the spherical foliation, and the angular barrier enter as distinct contributions to the characteristic potential.

In double-null coordinates,
\begin{equation}
h_{ab}dx^a dx^b=-2e^{-2\sigma(u,v)}du\,dv,
\label{eq:doublenull}
\end{equation}
Eq.~\eqref{eq:normalwave} reads
\begin{equation}
\partial_u\partial_v\chi_{\ell m}
+\frac12e^{-2\sigma}\mathcal V_\ell\chi_{\ell m}
=-\frac{4\pi r}{3}e^{-2\sigma}\delta T_{\ell m}.
\label{eq:characteristic}
\end{equation}
This is a characteristic form of the scalaron response adapted to collapse and black-hole interiors.  For nonconstant $\bar F$, the exact metric-mixing term in Eq.~\eqref{eq:lineartrace} can be retained on the right-hand side and treated jointly with the remaining linearized gravitational constraints.

For a static background
\begin{equation}
ds^2=-A(r)dt^2+\frac{dr^2}{B(r)}+r^2d\Omega^2,
\label{eq:static}
\end{equation}
the normal-space geometry obeys
\begin{equation}
D^aD_a r=\frac{(AB)'}{2A}.
\label{eq:D2rstatic}
\end{equation}
{On a connected static region with $A>0$ and $B>0$, introducing the tortoise coordinate $dr_*/dr=(AB)^{-1/2}$ gives}
\begin{equation}
D^aD_a\chi
=\frac{1}{A}\left(-\partial_t^2+\partial_{r_*}^2\right)\chi.
\label{eq:D2chistatic}
\end{equation}
Consequently the homogeneous constant-$F$ scalaron equation assumes the Schr\"odinger form
\begin{equation}
\left[-\partial_t^2+\partial_{r_*}^2-V_\ell(r)\right]\chi_{\ell m}=0,
\label{eq:schrodinger}
\end{equation}
with
\begin{equation}
{
V_\ell(r)
=A\left[\bar m_{\rm sc}^2+\frac{\ell(\ell+1)}{r^2}\right]
+\frac{(AB)'}{2r}.}
\label{eq:staticpotential}
\end{equation}
For $A=B=N(r)$,
\begin{equation}
V_\ell
=N\left[\bar m_{\rm sc}^2+\frac{\ell(\ell+1)}{r^2}
+\frac{N'}{r}\right].
\label{eq:equalpotential}
\end{equation}
The standard black-hole scalaron stability problem is recovered as a special case~\cite{MyungMoonSon2011}.

\section{Null transport, Wald-area transport, and the potential-jet hierarchy}
\label{sec:transport}

\subsection{Radial null geometry}

Let $(\ell^\mu,n^\mu)$ be future-directed radial null vectors normalized by
\begin{equation}
\ell^2=n^2=0,
\qquad
\ell\cdot n=-1.
\label{eq:nullnorm}
\end{equation}
The expansions of the symmetry spheres are
\begin{equation}
\theta_{(\ell)}=\frac{2}{r}\ell(r),
\qquad
\theta_{(n)}=\frac{2}{r}n(r).
\label{eq:expansions}
\end{equation}
For an affinely parametrized radial null vector $k^\mu$, spherical symmetry removes shear and twist, and Raychaudhuri's equation gives
\begin{equation}
\frac{d\theta_{(k)}}{d\lambda}
=-\frac12\theta_{(k)}^2-R_{kk}.
\label{eq:ray}
\end{equation}
The null projection of Eq.~\eqref{eq:field} is
\begin{equation}
F R_{kk}=8\pi T_{kk}+\nabla_k\nabla_kF.
\label{eq:nullfield}
\end{equation}
Thus ordinary area focusing depends explicitly on the scalaron Hessian.

\subsection{Entropy expansion}

For stationary horizons in metric $f(R)$ gravity, Wald entropy is proportional to $F$ times the horizon area~\cite{Wald1993,IyerWald1994}.  Dynamical entropy away from stationarity is more subtle~\cite{HollandsWaldZhang2024}.  Recent analyses have shown that in $f(R)$ gravity the generalized expansion associated with $F$ plays a distinguished role in generalized apparent horizons and in the relation between dynamical and Wald entropy~\cite{Kong2025,JiaQiGao2026,Qi2026}.  We therefore use $Fr^2$ as the local sphere-integrated Wald-density variable and $\Theta_{(k)}$ as its normalized null derivative; no claim that $Fr^2$ is the most general nonstationary entropy functional is required for the exact transport identities below.  In spherical symmetry it is therefore natural to introduce
\begin{equation}
\mathscr A_W\equiv F r^2,
\label{eq:Waldarea}
\end{equation}
{so the sphere area in the Einstein frame is $4\pi\mathscr A_W$.  The stationary Wald entropy is $S_W=\pi\mathscr A_W$ when $\hbar=k_B=1$ as well.}  Along a radial null direction $k^\mu$, define
\begin{equation}
\Th_{(k)}
\equiv\frac{1}{Fr^2}\Lie_k(Fr^2)
=\theta_{(k)}+k(\ln F).
\label{eq:entropyexp}
\end{equation}
Combining Eqs.~\eqref{eq:ray} and~\eqref{eq:nullfield} gives the exact identity
\begin{equation}
{
\frac{d\Th_{(k)}}{d\lambda}
=-\frac12\theta_{(k)}^2
-\left(\frac{k(F)}{F}\right)^2
-\frac{8\pi}{F}T_{kk}.}
\label{eq:entropyfocus}
\end{equation}
The second derivative $k[k(F)]$ cancels exactly.  Hence, for $F>0$ and physical matter satisfying $T_{kk}\ge0$, the entropy expansion is nonincreasing.  {In the variables $(F,g_{\mu\nu})$, Eq.~\eqref{eq:entropyfocus} contains no explicit $f_{RR}$ or $\Msq$.  Written in curvature variables, its explicitly displayed scalaron kinetic term is $f_{RR}^2[k(R)]^2/F^2$, so this term alone does not distinguish the sign of $f_{RR}$.  This is an algebraic statement about the identity, not about the dependence of a background solution on the theory.  Moreover, $\Theta_{(k)}=\theta_{(k)}+f_{RR}k(R)/F$ itself depends linearly on $f_{RR}$.  In the GR-like regime~\eqref{eq:GRlike}, the sign of $f_{RR}$ decides the DK mass sign; in general the numerator $F-Rf_{RR}$ also matters.  The entropy-focusing identity is the local spherical counterpart of the conformal-frame second-law argument of Ref.~\cite{JacobsonKangMyers1995}, with the dictionary given in Sec.~\ref{sec:einsteinframe}.}

\begin{corollary}[Entropy focusing is not a scalaron-stability criterion]
\label{cor:focusnotstability}
Under $F>0$ and the matter NEC, Eq.~\eqref{eq:entropyfocus} enforces monotonic entropy focusing independently of the sign of $\Msq$.  Therefore entropy focusing cannot, by itself, diagnose the Dolgov--Kawasaki/tachyonic stability of the scalaron.
\end{corollary}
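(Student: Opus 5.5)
The plan is to read the corollary directly off the exact identity~\eqref{eq:entropyfocus} and then show, with an explicit pair of examples, that its conclusion does not determine the sign of $\Msq$.

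First I would recheck the derivation of Eq.~\eqref{eq:entropyfocus}. Differentiate $\Th_{(k)}=\theta_{(k)}+k(F)/F$ along an affinely parametrized radial null generator. Use Raychaudhuri~\eqref{eq:ray} for $d\theta_{(k)}/d\lambda$. For the $F$ term write $k[k(F)]/F-(k(F)/F)^2$, and note that affinity gives $k[k(F)]=\nabla_k\nabla_k F$. Substituting $R_{kk}$ from the null projection~\eqref{eq:nullfield}, the Hessian term $\nabla_k\nabla_kF/F$ cancels against $k[k(F)]/F$, which leaves
\[
\frac{d\Th_{(k)}}{d\lambda}=-\tfrac12\theta_{(k)}^2-\Bigl(\frac{k(F)}{F}\Bigr)^2-\frac{8\pi}{F}T_{kk}.
\]
With $F>0$ and $T_{kk}\ge0$, every term on the right is nonpositive, so $\Th_{(k)}$ is nonincreasing along every radial null generator. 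In this step only $F>0$, the NEC, and the field equations enter. Neither $f_{RR}$ nor $\U_{,FF}$ appears anywhere, so the monotonicity holds whatever the sign of $\Msq$.

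Second, I would turn this into a statement about diagnostics. A criterion built only from entropy focusing gives the same verdict on any two configurations that both satisfy $F>0$ and the NEC. To show it cannot detect tachyonic scalarons, it is enough to exhibit admissible backgrounds with $\Msq>0$ and with $\Msq<0$.

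The simplest witness is $f(R)=R+\alpha R^2$ on a vacuum or electrovacuum GR background with $R=0$. There $F=1>0$ and $\Msq=1/(6\alpha)$ by Eq.~\eqref{eq:mscalar}, so its sign is the sign of $\alpha$. Maxwell matter satisfies the NEC, and Eq.~\eqref{eq:entropyfocus} reduces to ordinary GR focusing in both cases. By Proposition~\ref{prop:constantF}, $\Msq$ is the genuine mass of the decoupled scalaron on these backgrounds, so the sign of $\alpha$ does decide tachyonic versus non-tachyonic behaviour there.

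The main obstacle is not computational. It is making precise that \emph{by itself} means "as a function of the transport data alone". I would handle it by this same-background pair: the transport data are identical and the stability verdicts opposite. The full Reissner--Nordstr\"om construction can be deferred to Sec.~\ref{sec:RN}, since flat or Schwarzschild backgrounds already suffice for the logic here.
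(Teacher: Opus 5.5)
Your proposal is correct and follows the paper's reasoning. The cancellation of $k[k(F)]$ in Eq.~\eqref{eq:entropyfocus} gives monotonic entropy focusing with no explicit $f_{RR}$ or $\Msq$, and the same-background $R=0$ family of $f(R)=R+\alpha R^2$, with $\Msq=1/(6\alpha)$, supplies the decisive opposite-sign witness. The paper uses Reissner--Nordstr\"om (Theorem~\ref{thm:RNseparation}) for this witness, while your flat and Schwarzschild backgrounds are covered equally by Theorem~\ref{thm:onejet} with one-jet $(0,1)$ at $R=0$.
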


The statement concerns the exact background transport law.  If an unstable perturbation grows to nonlinear amplitude, the background itself eventually ceases to approximate the evolved solution; this does not alter the logical independence of the two criteria.

\subsection{Entropy-weighted cross transport}

For cross transport choose $n^\mu$ affine and parallel transport $\ell^\mu$ along it,
\begin{equation}
\nabla_n n^\mu=0,
\qquad
\nabla_n\ell^\mu=0.
\label{eq:crossgauge}
\end{equation}
Define the entropy-weighted outgoing flux
\begin{equation}
\Jx
\equiv Fr^2\Th_{(\ell)}
=\Lie_\ell(Fr^2).
\label{eq:Jdef}
\end{equation}
The derivative cancellation underlying the cross-transport law can be displayed explicitly.  In the gauge~\eqref{eq:crossgauge}, the spherical cross-focusing identity is
\begin{equation}
\Lie_n\theta_{(\ell)}
=-\theta_{(\ell)}\theta_{(n)}
-\frac{1}{r^2}+G_{\ell n}.
\label{eq:crossfocus}
\end{equation}
Using
\begin{equation}
\Jx=Fr^2\theta_{(\ell)}+r^2\ell(F),
\label{eq:Jexpanded}
\end{equation}
together with $\Lie_n(r^2)=r^2\theta_{(n)}$, one obtains
\begin{align}
\Lie_n\Jx
={}&-F+r^2FG_{\ell n}
+r^2\nabla_\ell\nabla_nF
\notag\\
&+r^2\theta_{(\ell)}n(F)
+r^2\theta_{(n)}\ell(F).
\label{eq:Jintermediate}
\end{align}
For any spherically symmetric scalar $F$,
\begin{equation}
\Box F
=-2\nabla_\ell\nabla_nF
-\theta_{(\ell)}n(F)
-\theta_{(n)}\ell(F),
\label{eq:boxdecomp}
\end{equation}
whereas the mixed projection of Eq.~\eqref{eq:field} gives
\begin{equation}
FG_{\ell n}
=8\pi T_{\ell n}
+\nabla_\ell\nabla_nF
+\Box F
+\frac12\U.
\label{eq:mixedcross}
\end{equation}
Substituting Eqs.~\eqref{eq:boxdecomp} and~\eqref{eq:mixedcross} into Eq.~\eqref{eq:Jintermediate} cancels every derivative of $F$ and yields the exact cross-transport identity
\begin{equation}
{
\Lie_n\Jx
=-F+r^2\left[8\pi T_{\ell n}+\frac12\U(F)\right].}
\label{eq:crossentropy}
\end{equation}
This cancellation is important: the transverse entropy transport is controlled algebraically by $F$, $\U$, the mixed matter flux, and the curvature scale of the symmetry spheres, rather than by the scalaron kinetic derivatives that appear before the field equations are combined.  Equivalently, define
\begin{equation}
{
\Sx
\equiv
\frac{F}{r^2}-8\pi T_{\ell n}-\frac12\U(F),}
\label{eq:Sxdef}
\end{equation}
so that
\begin{equation}
\Lie_n\Jx=-r^2\Sx.
\label{eq:Sxtransport}
\end{equation}

{At an outgoing entropy-marginal sphere, $\Th_{(\ell)}=0$ and hence $\Jx=0$, so $\Lie_n\Th_{(\ell)}=-\Sx/F$.  With future-directed $\ell$ and $n$, the additional condition $\Th_{(n)}<0$ defines the future designation.  Such a nondegenerate sphere is outer when $\Sx>0$ and inner when $\Sx<0$.  If $\Th_{(n)}>0$ it is instead past-designated; if $\Th_{(n)}=0$, the future/past designation requires separate treatment.  The transverse derivative classification depends algebraically on $F$, $\U$, the mixed physical stress, and $1/r^2$.}  {This is the entropy-weighted version of Hayward's outer/inner classification of marginal spheres~\cite{Hayward1994}.  The area-weighted counterpart of Eq.~\eqref{eq:crossentropy}, in which the transported quantity is $r^2\theta_{(\ell)}$ and the mixed source retains scalaron derivative terms that are compared with the same threshold $F/r^2$, was derived in Ref.~\cite{MunozPalmaHR2026}; the Wald-area weighting adopted here is what removes those derivative terms and leaves the purely algebraic source~\eqref{eq:Sxdef}.}

\subsection{\texorpdfstring{{Einstein-frame dictionary}}{Einstein-frame dictionary}}
\label{sec:einsteinframe}

{The derivative cancellations in Eqs.~\eqref{eq:entropyfocus} and~\eqref{eq:crossentropy} have a simple origin, which it is useful to make explicit.  For $F>0$ define}
\begin{equation}
{
\widetilde g_{\mu\nu}=Fg_{\mu\nu},
\qquad
\varphi=\sqrt{\frac{3}{16\pi}}\,\ln F,
\qquad
V(\varphi)=\frac{\U(F)}{16\pi F^2},}
\label{eq:EFmap}
\end{equation}
{so that the action~\eqref{eq:action} becomes that of general relativity for $\widetilde g_{\mu\nu}$ with a canonical scalar $\varphi$ of potential $V$ and matter coupled to $F^{-1}\widetilde g_{\mu\nu}$~\cite{SotiriouFaraoni2010,DeFeliceTsujikawa2010}.  The Einstein-frame areal radius is $\widetilde r^{\,2}=Fr^2=\mathscr A_W$: the Wald-area variable is the squared Einstein-frame areal radius, and $4\pi\mathscr A_W$ is the corresponding sphere area.  If $k^\mu$ is affine for $g_{\mu\nu}$, then $\widetilde k^\mu=F^{-1}k^\mu$ is affine for $\widetilde g_{\mu\nu}$, and with $(\widetilde\ell^\mu,\widetilde n^\mu)=(\ell^\mu,F^{-1}n^\mu)$ and $\widetilde T_{\mu\nu}=T_{\mu\nu}/F$ one finds}
\begin{align}
{\Th_{(k)}}
&{{}=F\,\widetilde\theta_{(\widetilde k)},
\qquad
\Jx=\Lie_\ell\big(\widetilde r^{\,2}\big),}
\notag\\
{\Sx}
&{{}=F^2\left[\frac{1}{\widetilde r^{\,2}}
-8\pi\big(\widetilde T_{\widetilde\ell\widetilde n}+V\big)\right].}
\label{eq:EFdictionary}
\end{align}
{In these variables Eq.~\eqref{eq:entropyfocus} is the Raychaudhuri equation of $\widetilde g_{\mu\nu}$ sourced by the null energy of $\varphi$,}
\begin{equation}
{
\frac{d\Th_{(k)}}{d\lambda}
=-\frac12\Th_{(k)}^2
+\Th_{(k)}\,k(\ln F)
-\frac32\big[k(\ln F)\big]^2
-\frac{8\pi}{F}T_{kk},}
\label{eq:EFray}
\end{equation}
{which is algebraically identical to Eq.~\eqref{eq:entropyfocus}: the term linear in $\Th_{(k)}$ accounts for $\lambda$ not being affine for $\widetilde g_{\mu\nu}$, and $\tfrac32[k(\ln F)]^2=8\pi[k(\varphi)]^2$.  Likewise Eq.~\eqref{eq:crossentropy} is the Einstein-frame cross-focusing equation, in which a minimally coupled spherically symmetric scalar contributes only through its potential, $T^{(\varphi)}_{\widetilde\ell\widetilde n}=V$.}

{Seen from the Einstein frame, longitudinal null focusing samples the scalar kinetic flux, while spherical cross-focusing samples the potential value $V$.  The potential cancels from the longitudinal null projection, and the scalar kinetic terms cancel from the spherical mixed projection.  Neither identity directly samples $V_{,\varphi\varphi}$.  We regard this as the cleanest way to understand why the separation holds.  The Jordan-frame formulation nevertheless carries information that the dictionary does not trivialize.  First, matter is minimally coupled only in the Jordan frame, so $T_{kk}\ge0$ in Eq.~\eqref{eq:entropyfocus} is the physical null energy condition, and $Fr^2$ is the Noether-charge entropy density of the frame in which test particles follow geodesics.  Second, the DK coefficient is not, in general, the curvature of $V$:}
\begin{align}
{V_{,\varphi\varphi}}
&{{}=\frac13\left(\U_{,FF}-\frac{3\U_{,F}}{F}+\frac{4\U}{F^2}\right)}
\notag\\
&{{}=\frac{\Msq}{F}+\frac{2\left(8\pi T-3\Box F\right)}{3F^2},}
\label{eq:EFmass}
\end{align}
{where the trace equation~\eqref{eq:traceU} was used in the second step.  They coincide, up to the conformal factor, on constant-$F$, tracefree backgrounds.  More generally, equality requires $8\pi T-3\Box F=0$, equivalently $2\U-F\U_{,F}=0$.  In other circumstances the displayed trace-dependent term distinguishes $V_{,\varphi\varphi}$ from the Jordan-frame response coefficient $\Msq$; a complete perturbation analysis must also retain matter and metric coupling.  Third, and most importantly, the separation survives the imposition of the field equations on complete solutions, which is the content of Sec.~\ref{sec:RN}.}

\subsection{The potential-jet hierarchy}

Equations~\eqref{eq:crossentropy} and~\eqref{eq:mscalar} expose a simple but important structure:
\begin{align}
\text{entropy cross transport}
&:\quad \U(F),
\label{eq:jet0}\\
\text{background curvature}
&:\quad \U_{,F}=R,
\label{eq:jet1}\\
\text{scalaron response}
&:\quad \U_{,FF}=\frac{1}{f_{RR}},
\label{eq:jet2}\\
\text{tachyonic mass}
&:\quad
3\Msq=F\U_{,FF}-\U_{,F}.
\label{eq:jetmass}
\end{align}
We organize these relations into what we call the \emph{potential-jet hierarchy}.  Null entropy transport samples the value of the scalaron Legendre potential, whereas local DK response samples its curvature in field space.  These are different data of the gravitational Lagrangian.

This separation is stronger than the statement that $F>0$ and $f_{RR}>0$ are different viability conditions.  The transverse entropy source can remain unchanged while the scalaron mass changes sign, as the exact example in Sec.~\ref{sec:RN} will demonstrate.

\subsection{Transport--stability separation}
\label{sec:separation}

We can now formulate the main conceptual result.

\begin{theorem}[Transport--stability separation]
\label{thm:separation}
In metric $f(R)$ gravity, the sign of the local scalaron mass $\Msq$ is not determined by either (i) the matter-resolved longitudinal entropy-focusing law~\eqref{eq:entropyfocus}, or (ii) the entropy-weighted transverse transport source~\eqref{eq:Sxdef}.
{Conversely, a scalaron mass sign does not fix the outer/inner derivative type: in Reissner--Nordstr\"om the same constant $\Msq$ occurs at entropy-marginal spheres with opposite signs of $\Sx$.  These comparisons concern fixed backgrounds and variation over theory space, not independence within every evolving solution.  Therefore the background transport laws cannot universally determine the scalaron mass sign or replace a dynamical stability analysis.}
\end{theorem}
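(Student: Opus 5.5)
The strategy is to prove the theorem by explicit counterexample. The cleanest tool is the quadratic family $f(R)=R+\alpha R^2$ on the Reissner--Nordstr\"om background.

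\emph{Step one: RN solves every member of the family.} Maxwell matter is tracefree, so $T=0$, and RN has $R=0$. Then $F=1+2\alpha R=1$ is constant, and $f-FR=0$. The field equations~\eqref{eq:field} reduce to $G_{\mu\nu}=8\pi T_{\mu\nu}$, which RN satisfies. The trace equation~\eqref{eq:trace} then holds trivially, since $\Box F=0$ and $FR-2f=0$.

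Consequently, for every $\alpha$ the following background data coincide:
\begin{itemize}
\item the metric, $\theta_{(k)}$ and $k(F)=0$;
\item $T_{kk}$ and $T_{\ell n}$;
\item $\U=FR-f=0$.
\end{itemize}
The right-hand side of~\eqref{eq:entropyfocus} is therefore $\alpha$-independent, and so is $\Sx=1/r^2-8\pi T_{\ell n}$ from~\eqref{eq:Sxdef}. This holds pointwise, not just in sign.

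\emph{Step two: the scalaron mass varies with $\alpha$.} From~\eqref{eq:mscalar} with $f_{RR}=2\alpha$, $F=1$ and $R=0$, we get $\Msq=1/(6\alpha)$. Taking $\alpha>0$ and $\alpha<0$ gives opposite mass signs on identical transport data. This proves (i) and (ii): the sign of $\Msq$ cannot be a function of the data in~\eqref{eq:entropyfocus} or of $\Sx$. Proposition~\ref{prop:constantF} confirms that this $\Msq$ is the genuine mass of the decoupled scalaron equation~\eqref{eq:decoupled} on this background, so the counterexample concerns actual dynamical response and not merely a formal coefficient.

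\emph{Step three: the converse.} Fix one $\alpha$, so $\Msq$ is constant across the whole RN spacetime. At the horizons $r_\pm$, $F=1$ gives $\Th_{(\ell)}=\theta_{(\ell)}=0$ for a suitable choice of null pair. These are entropy-marginal spheres. With the normalization $\ell\cdot n=-1$ and $E=Q/r^2$, we have $8\pi T_{\ell n}=Q^2/r^4$. Hence
\[
\Sx=\frac{1}{r^2}-\frac{Q^2}{r^4}=\frac{r^2-Q^2}{r^4}.
\]
Since $r_+r_-=Q^2$ with $r_->0$, it follows that $r_+^2>Q^2>r_-^2$ in the nonextremal case. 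So $\Sx>0$ at $r_+$ and $\Sx<0$ at $r_-$, while $\Msq$ is the same at both. As a check, $\Sx$ is proportional to $\mp$ the surface gravity, since $N'(r_\pm)\propto (r_\pm^2-Q^2)/r_\pm^3$.

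\emph{Step four: the concluding statement.} "Cannot universally determine" follows because any putative function from transport data to the sign of $\Msq$ would have to assign two different values to the same input. The scope caveat about fixed backgrounds and variation over theory space is then simply a description of what the construction does.

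\emph{Main obstacle.} The only delicate point is the sign convention for $T_{\ell n}$ and for the future-directed pair at the inner horizon, where $r$ is a time coordinate. I would verify $8\pi T_{\ell n}=Q^2/r^4$ invariantly, using $T_{\ell n}=-\tfrac12 T^a{}_a|_{\text{normal}}$ for the Maxwell field, which is sign-definite regardless of region. I would also check that $\Th_{(n)}<0$ at both horizons, so that both are future-designated. No other step involves computation beyond substitution.
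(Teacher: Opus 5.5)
Your proposal is correct and follows the paper's decisive step. The paper also rests the argument on the same-background Reissner--Nordstr\"om pair in $f=R+\alpha R^2$, presented there as the simplest instance of the one-jet theorem (Theorem~\ref{thm:onejet}), with $\Msq=1/(6\alpha)$ and, for the converse, $\Sx(r_\pm)=2\kappa_\pm/r_\pm$. One small slip in your side check: since $N'(r_\pm)=(r_\pm^2-Q^2)/r_\pm^3$, the quantity $\Sx(r_\pm)=N'(r_\pm)/r_\pm$ is proportional to $+$ the signed surface gravity (that is, $\pm$ its magnitude), not $\mp$.
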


\begin{proof}
Equation~\eqref{eq:entropyfocus} contains $F$, $k(F)$, and $T_{kk}$ but no {explicit} $f_{RR}$ or $m_{\rm sc}^2$, while the transverse source~\eqref{eq:Sxdef} depends on $F$, $\mathcal U$, $T_{\ell n}$, and $r$, rather than on $\mathcal U_{,FF}$.  These identities therefore contain no direct algebraic determination of the scalaron restoring coefficient.  Algebraic independence alone is not the full argument, because the derivatives of $\mathcal U$ are related once a particular theory is fixed.
{The decisive step is the same-background construction of Sec.~\ref{sec:RN}.  By Theorem~\ref{thm:onejet}, on every constant-curvature background with tracefree matter the complete transport data are fixed by the one-jet $(f,f_R)$ at the background curvature, whereas $m_{\rm sc}^2$ is fixed by $f_{RR}$, which can be prescribed independently.  The simplest instance is Theorem~\ref{thm:RNseparation}: one and the same Reissner--Nordstr\"om geometry, with identical transport data, occurs for all $f(R)=R+\alpha R^2$, whereas $m_{\rm sc}^2=1/(6\alpha)$ changes sign.  The converse direction is of a different and weaker nature, and we state it as such: $m_{\rm sc}^2$ is a property of the jet of the theory at the background curvature, whereas $\Sx$ is a quasi-local function on a given solution whose sign at a marginal sphere is its Hayward type~\cite{Hayward1994}.  In Reissner--Nordstr\"om it is proportional to the signed surface gravity, $\Sx(r_\pm)=2\kappa_\pm/r_\pm$ [Eq.~\eqref{eq:Sxkappa}], for every $\alpha\neq0$.}
Hence neither diagnostic determines the other, even after the field equations are imposed.
\end{proof}

\begin{remark}
The theorem does not state that scalaron dynamics and geometry fail to interact.  Equation~\eqref{eq:nullfield} shows exactly how a varying scalaron Hessian changes ordinary area convergence, and Eq.~\eqref{eq:lineartrace} shows how metric perturbations feed back into scalaron dynamics on nonconstant-$F$ backgrounds.  The statement is instead that background null/entropy transport and dynamical scalaron stability are different questions and neither can be substituted for the other.
\end{remark}

\begin{remark}
\label{rem:scope}
{The independence asserted in Theorem~\ref{thm:separation} is a statement over theory space at fixed background.  It does not exclude that, on a particular solution of a particular theory, the two diagnostics are correlated through the field equations; an explicit instance is given in Sec.~\ref{sec:running} [Eq.~\eqref{eq:TWPcorrelation}].}
\end{remark}

A concise way of organizing these logically distinct statements is
\begin{align}
	F>0,\quad T_{kk}\ge0
	&\quad\Longrightarrow\quad
	\frac{d\Theta_{(k)}}{d\lambda}\le0.
	\label{eq:focusoneway}
\end{align}
{This is a genuine one-way implication: for $F>0$, matter satisfying the null energy condition guarantees nonincreasing generalized expansion along the null congruence.  The separation statements are of a different kind.  They are not implications between inequalities, but statements about which datum of the Lagrangian each diagnostic samples and which questions it therefore leaves open; they are collected in Table~\ref{tab:filters}.  Neither longitudinal entropy focusing nor the sign of the transverse transport discriminator determines the sign of the local scalaron mass parameter, and scalaron stability does not determine the transport behavior.}

\begin{table*}[t]
\caption{{Layered strong-field diagnostics in metric $f(R)$ gravity.  The local layers sample different data along the background; the nonlinear layer requires the full evolution.  In layer I, the derivative law contains $[k(F)/F]^2$, but the expansion itself contains $k(F)$ linearly.  Ordinary area focusing is a distinct diagnostic.  The layers are numbered as in Sec.~\ref{sec:inner}; layer 0 is the coupling-regularity condition assumed throughout.}}
\label{tab:filters}
\begin{ruledtabular}
\begin{tabular}{llp{0.16\textwidth}p{0.27\textwidth}p{0.25\textwidth}}
{Layer} & {Diagnostic} & {Datum of $f$ sampled} & {Controls} & {Does not determine}\\
\hline
{0} & {$F>0$} & {$f_R$} & {sign of the coupling; Wald-area variable $Fr^2$} & {$\Msq$; horizon type}\\
{I} & {$\Th_{(k)}$, $\Sx$} & {$F$, $\U$, first derivatives of $F$} & {generalized entropy focusing; outer/inner derivative type} & {sign of $\Msq$}\\
{II} & {$\Msq$} & {$\U_{,F}$, $\U_{,FF}$} & {local scalaron response; exact mass at constant $F$} & {transport; rotating modes; layer III}\\
{III} & {fluxes, tails, $|\kappa_-|$} & {full field equations, global evolution} & {mass growth and extendibility in a specified regularity class} & {---}\\
\end{tabular}
\end{ruledtabular}
\end{table*}

For compact objects possessing an inner horizon, there is an additional and logically independent level of the problem. Even when the scalaron sector is locally stable and the null-transport conditions are well behaved, one must still determine whether the Cauchy horizon is stable against the nonlinear blueshift and counterstreaming mechanism responsible for mass inflation.

\section{Exact discriminator: Reissner--Nordstr\"om in quadratic gravity}
\label{sec:RN}

The separation theorem becomes especially transparent in the one-parameter family
\begin{equation}
f_\alpha(R)=R+\alpha R^2,
\label{eq:quadratic}
\end{equation}
coupled minimally to Maxwell electrodynamics.  The Maxwell stress tensor is tracefree, $T=0$.  {Take $M>0$ and $0<|Q|<M$, so that two distinct positive horizon radii exist.}  Consider the nonextremal Reissner--Nordstr\"om geometry
\begin{equation}
ds^2=-N(r)dt^2+\frac{dr^2}{N(r)}+r^2d\Omega^2,
\label{eq:RNmetric}
\end{equation}
with
\begin{equation}
N(r)=1-\frac{2M}{r}+\frac{Q^2}{r^2}
=\frac{(r-r_+)(r-r_-)}{r^2},
\label{eq:Ndef}
\end{equation}
and
\begin{equation}
r_\pm=M\pm\sqrt{M^2-Q^2},
\qquad
r_+r_-=Q^2.
\label{eq:horizons}
\end{equation}

Since $R=0$ identically,
\begin{equation}
f_\alpha(0)=0,
\qquad
F(0)=1,
\qquad
\U(0)=0,
\qquad
\nabla_\mu F=0.
\label{eq:RNdata}
\end{equation}
Equation~\eqref{eq:field} therefore reduces exactly to the Einstein--Maxwell equation for every value of $\alpha$.  Thus the same Reissner--Nordstr\"om spacetime is an exact solution throughout the entire quadratic family.  The point $\alpha=0$ is the general-relativistic limit, in which the additional scalar mode is absent; the scalaron comparison below concerns $\alpha\neq0$.

\subsection{Identical null and entropy transport}

For a normalized radial null pair, the Maxwell stress satisfies
\begin{equation}
T_{\ell\ell}=T_{nn}=0,
\qquad
8\pi T_{\ell n}=\frac{Q^2}{r^4}.
\label{eq:Maxwellproj}
\end{equation}
Since $F=1$ and $\U=0$, the longitudinal Ricci projections vanish,
\begin{equation}
R_{\ell\ell}=R_{nn}=0,
\label{eq:RNlong}
\end{equation}
and the entropy expansion coincides with the ordinary area expansion.  The transverse entropy source is
\begin{equation}
{
\Sx(r)=\frac{1}{r^2}-\frac{Q^2}{r^4}.}
\label{eq:RNSx}
\end{equation}
At the horizons,
\begin{align}
\Sx(r_+)
&=\frac{r_+-r_-}{r_+^3}>0,
\label{eq:Splus}\\
\Sx(r_-)
&=-\frac{r_+-r_-}{r_-^3}<0.
\label{eq:Sminus}
\end{align}
The outer--inner horizon pair is therefore realized by a transverse sign reversal, independently of $\alpha$.  {Equivalently, $\Sx=(rN)'/r^2$, so that at the horizons}
\begin{align}
{\Sx(r_\pm)}
&{{}=\frac{N'(r_\pm)}{r_\pm}=\frac{2\kappa_\pm}{r_\pm},}
\notag\\
{\kappa_\pm}
&{{}\equiv\frac12N'(r_\pm)=\pm\frac{r_+-r_-}{2r_\pm^2}.}
\label{eq:Sxkappa}
\end{align}
{The transverse discriminator at a Killing horizon is thus proportional to its signed surface gravity, with the same sign and the outer/inner derivative type in the sense of Hayward~\cite{Hayward1994}; $\Sx$ changes sign at $r=|Q|=\sqrt{r_+r_-}$, between the horizons.  In the extremal limit $r_+=r_-$ the two marginal spheres merge into a degenerate one with $\Sx=0$, for which the outer/inner classification is lost; the $\alpha$-independence of all transport data, and hence the separation, is unaffected.}

\subsection{Opposite scalaron stability on the same background}

For Eq.~\eqref{eq:quadratic}, $f_{RR}=2\alpha$,
and Eq.~\eqref{eq:mscalar} gives on the Reissner--Nordstr\"om background
\begin{equation}
{
\Msq=\frac{1}{6\alpha}{,}}
\label{eq:RNmass}
\end{equation}
{the familiar mass of the scalar mode of quadratic gravity~\cite{Stelle1978}.}
Thus $\alpha>0$ gives a non-tachyonic scalaron, while $\alpha<0$ gives a tachyonic scalaron.  Yet Eqs.~\eqref{eq:RNmetric}--\eqref{eq:Sminus} are unchanged.

The scalaron perturbation is exactly decoupled by Proposition~\ref{prop:constantF}.  Since Maxwell theory remains tracefree under perturbation, $\delta T=0$, and
\begin{equation}
{\left(\bar\Box-\frac{1}{6\alpha}\right)\delta F=0,}
\label{eq:RNscalarwave}
\end{equation}
The static radial potential is
\begin{equation}
V_\ell^{\rm RN}
=N\left[
\frac{1}{6\alpha}
+\frac{\ell(\ell+1)}{r^2}
+\frac{N'}{r}
\right].
\label{eq:RNpotential}
\end{equation}
For $r\ge r_+$,
\begin{equation}
N\ge0,
\qquad
N'=\frac{2(Mr-Q^2)}{r^3}>0,
\label{eq:Npositive}
\end{equation}
because $Mr_+-Q^2=r_+(r_+-r_-)/2>0$.  Hence, for $\alpha>0$,
\begin{equation}
V_\ell^{\rm RN}\ge0
\qquad (r\ge r_+),
\label{eq:Vpositive}
\end{equation}
and the standard conserved-energy argument excludes exponentially growing finite-energy exterior scalaron modes that are regular at the future horizon and satisfy the usual asymptotically flat boundary condition at infinity.  For $\alpha<0$, the potential tends to the negative constant $1/(6\alpha)$ at spatial infinity, and sufficiently long-wavelength scalaron perturbations lie in a tachyonic band.  This is the black-hole analogue of the DK instability criterion and agrees with the general scalaron analysis of $f(R)$ black holes~\cite{MyungMoonSon2011}.  {The growth rates are bounded.  Since $0\le N<1$ and $N'>0$ for $r\ge r_+$, one has $V_\ell^{\rm RN}\ge\Msq$.  For the self-adjoint radial operator $H_\ell=-d^2/dr_*^2+V_\ell^{\rm RN}$ on $L^2(dr_*)$, every negative spectral value $E=-\Omega^2$ therefore satisfies}
\begin{equation}
{
0<\Omega\le\Omega_{\max}\equiv\sqrt{-\Msq}=\frac{1}{\sqrt{6|\alpha|}},}
\label{eq:Omegamax}
\end{equation}
{The negative asymptotic potential produces negative continuous spectrum rather than requiring a discrete growing bound state.  Finite-energy wave packets with spectral support in this band exhibit tachyonic growth; rates approaching the bound are associated with long wavelengths far from the hole.  This distinguishes spectral growth from a normalizable separated eigenmode.}

\begin{theorem}[Same-background opposite-stability pair]
\label{thm:RNseparation}
The nonextremal Reissner--Nordstr\"om geometry is an exact background solution of the Maxwell theory coupled to $f(R)=R+\alpha R^2$ for every $\alpha$.  Its null convergence, Wald-area transport, and outer--inner transverse sign reversal are independent of $\alpha$.  Nevertheless, the scalaron is non-tachyonic and exterior-mode stable for $\alpha>0$, while it is tachyonic for $\alpha<0$.  Therefore no criterion constructed solely from the background null or entropy transport can determine scalaron stability.
\end{theorem}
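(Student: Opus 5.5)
The proof has three parts, taken in order: the background is a solution for every $\alpha$, its transport data do not depend on $\alpha$, and the scalaron mass sign flips with $\alpha$.

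\emph{Background.} I would verify directly that Reissner--Nordstr\"om solves the field equations for every $\alpha$. Since $R=0$ for the geometry~\eqref{eq:RNmetric}--\eqref{eq:Ndef}, Eq.~\eqref{eq:RNdata} gives $F=1$, $\nabla_\mu F=0$, $f=0$ and $\U=0$. Equation~\eqref{eq:field} then collapses to $G_{\mu\nu}=8\pi T_{\mu\nu}$. This is satisfied by the Einstein--Maxwell Reissner--Nordstr\"om solution, and the Maxwell equations do not involve $\alpha$. The trace equation~\eqref{eq:traceU} is also satisfied, because $T=0$ and $2\U-F\U_{,F}=0-R=0$.

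\emph{Transport data.} Next I would check that every transport quantity is built from $\alpha$-independent data. The longitudinal law~\eqref{eq:entropyfocus} involves only $F$, $k(F)$, $\theta_{(k)}$ and $T_{kk}$. With $F=1$ it reduces to the GR Raychaudhuri equation with $R_{kk}=T_{kk}=0$, as in Eqs.~\eqref{eq:Maxwellproj}--\eqref{eq:RNlong}. The cross law~\eqref{eq:crossentropy} has source $\Sx=F/r^2-8\pi T_{\ell n}-\U/2$, which evaluates to Eq.~\eqref{eq:RNSx}. At the horizons this gives the signs in Eqs.~\eqref{eq:Splus}--\eqref{eq:Sminus}, equivalently $2\kappa_\pm/r_\pm$ as in Eq.~\eqref{eq:Sxkappa}. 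None of these expressions contains $\alpha$.

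\emph{Scalaron.} Finally I would compute $\Msq=1/(6\alpha)$ from Eq.~\eqref{eq:mscalar} with $f_{RR}=2\alpha$, $F=1$, $R=0$. I would then invoke Proposition~\ref{prop:constantF} with $\delta T=0$, since the Maxwell stress stays tracefree under perturbation. This yields the exactly decoupled equation~\eqref{eq:RNscalarwave} and, through Eq.~\eqref{eq:equalpotential}, the Schr\"odinger potential~\eqref{eq:RNpotential}.

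For $\alpha>0$, Eqs.~\eqref{eq:Npositive}--\eqref{eq:Vpositive} give $V_\ell^{\rm RN}\ge0$ on $r\ge r_+$. I would then use the conserved energy
\[
E=\int\bigl(|\partial_t\chi|^2+|\partial_{r_*}\chi|^2+V_\ell|\chi|^2\bigr)\,dr_*,
\]
which is nonnegative. Equivalently, the operator $H_\ell$ is nonnegative on $L^2(dr_*)$, so there are no exponentially growing finite-energy modes. For $\alpha<0$, $V_\ell\to1/(6\alpha)<0$ as $r_*\to\infty$. Weyl's theorem then puts the essential spectrum of $H_\ell$ at $[1/(6\alpha),\infty)$, which includes negative values. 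Wave packets supported there grow like $e^{\Omega t}$, which is the tachyonic statement, with the rate bound~\eqref{eq:Omegamax}.

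\emph{Conclusion and main obstacle.} Any criterion that depends only on background null or entropy transport takes the same value for $\alpha$ and $-\alpha$, while scalaron stability differs between them; so no such criterion determines stability. The step I expect to need the most care is the rigorous exterior mode-stability claim for $\alpha>0$. It requires fixing the function space: regularity at the future horizon, which means working in ingoing coordinates or treating the horizon as $r_*\to-\infty$, and decay at infinity. It also requires confirming that the boundary terms in the energy identity vanish, so that a mode $e^{\Omega t}\chi(r_*)$ with $\Omega>0$ forces $\Omega^2\|\chi\|^2\le -\langle\chi,H_\ell\chi\rangle\le0$. I would handle this by the standard multiplication by $\bar\chi$ and integration by parts on $(-\infty,\infty)$, using the exponential decay of such a mode at both ends.
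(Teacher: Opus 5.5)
Your proposal is correct and follows the paper's own argument. You reduce to Einstein--Maxwell via $R=0$, $F=1$, $\U=0$, read off the $\alpha$-independent transport data and $\Sx(r_\pm)$, and then use Proposition~\ref{prop:constantF} with $\Msq=1/(6\alpha)$, the nonnegativity of $V_\ell^{\rm RN}$ on $r\ge r_+$ for $\alpha>0$, and the negative asymptotic value $1/(6\alpha)$ for $\alpha<0$. Your mode-by-mode integration by parts (with future-horizon regularity giving decay as $r_*\to-\infty$) and the Weyl essential-spectrum remark spell out what the paper calls the standard conserved-energy argument and the tachyonic continuous band.
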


This exact pair is useful because it removes a possible ambiguity in purely local arguments: the two theories do not merely share the same values of $f$ and $F$ at one curvature; they share the same complete electrovacuum spacetime solution and the same horizon transport data.

\subsection{\texorpdfstring{{Generalization: the one-jet theorem}}{Generalization: the one-jet theorem}}
\label{sec:onejet}

{The Reissner--Nordstr\"om pair is not an isolated coincidence.  It is well known that solutions of general relativity with vanishing or constant scalar curvature remain solutions of large classes of $f(R)$ theories, and that their perturbations nevertheless distinguish those theories~\cite{Psaltis2008,BarausseSotiriou2008,delaCruzDombriz2009}.  The following statement isolates exactly which data of the Lagrangian such a background can and cannot see.}

\begin{theorem}[{One-jet rigidity of transport}]
\label{thm:onejet}
{Let $(g_{\mu\nu},\Psi)$ solve the Einstein equations with cosmological constant $R_0/4$ and rescaled coupling,}
\begin{equation}
{
G_{\mu\nu}+\frac{R_0}{4}\,g_{\mu\nu}=\frac{8\pi}{F_0}\,T_{\mu\nu}[\Psi],
\qquad F_0>0,}
\label{eq:onejetbackground}
\end{equation}
{for tracefree matter, $T=0$.  Then $(g_{\mu\nu},\Psi)$ solves Eq.~\eqref{eq:field} for every $f$ whose one-jet at $R_0$ is}
\begin{equation}
{
f(R_0)=\frac12F_0R_0,
\qquad
f_R(R_0)=F_0.}
\label{eq:onejet}
\end{equation}
{In all these theories $F=F_0$, $\U=\tfrac12F_0R_0$ and $\Th_{(k)}=\theta_{(k)}$ on the background.  For an affine null congruence, let $\sigma_{\mu\nu}$ and $\omega_{\mu\nu}$ be its shear and twist.  The general Raychaudhuri law and, in spherical symmetry only, the transverse source read}
\begin{align}
{\frac{d\theta_{(k)}}{d\lambda}}
&{{}=-\frac12\theta_{(k)}^2-\sigma_{\mu\nu}\sigma^{\mu\nu}}
\notag\\
&{{}\quad+\omega_{\mu\nu}\omega^{\mu\nu}-\frac{8\pi}{F_0}T_{kk},}
\notag\\
{\Sx}
&{{}=F_0\left(\frac{1}{r^2}-\frac{R_0}{4}\right)-8\pi T_{\ell n},}
\label{eq:onejettransport}
\end{align}
{Twist vanishes for a hypersurface-orthogonal congruence, and both shear and twist vanish for the radial spherical congruences used above.  At fixed background, all geometric null transport, and the spherical entropy transport, depend on $f$ only through the one-jet~\eqref{eq:onejet}.  The scalaron mass is instead}
\begin{equation}
{
\Msq=\frac{F_0}{3f_2}-\frac{R_0}{3},
\qquad
f_2\equiv f_{RR}(R_0),}
\label{eq:onejetmass}
\end{equation}
{where $f_2$ is an arbitrary nonzero real number, not constrained by Eq.~\eqref{eq:onejet}.  In particular both signs of $\Msq$ occur on the same background with the same transport data.}
\end{theorem}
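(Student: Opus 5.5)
The plan is to verify the field equations directly, then read off each transport quantity from data already fixed by the one-jet, and finally compute the mass. First, the background has constant curvature: taking the trace of Eq.~\eqref{eq:onejetbackground} with $T=0$ gives $-R+R_0=0$, so $R=R_0$ everywhere. Hence $F=f_R(R_0)=F_0$ is constant, and $\nabla_\mu F=0$. Then Eq.~\eqref{eq:field} reduces to
\begin{equation*}
F_0G_{\mu\nu}=8\pi T_{\mu\nu}+\tfrac12\big(f(R_0)-F_0R_0\big)g_{\mu\nu}.
\end{equation*}
With $f(R_0)=\tfrac12F_0R_0$ the last term is $-\tfrac14F_0R_0\,g_{\mu\nu}$. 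Dividing by $F_0>0$ reproduces Eq.~\eqref{eq:onejetbackground} exactly. The trace equation~\eqref{eq:trace} is automatically consistent: $F_0R_0-2f(R_0)=0=8\pi T$.

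I would also note that $\U=F_0R_0-f(R_0)=\tfrac12F_0R_0$. Since $k(\ln F)=0$, Eq.~\eqref{eq:entropyexp} gives $\Th_{(k)}=\theta_{(k)}$. Thus every quantity in this part depends on $f$ only through $(f(R_0),f_R(R_0))$.

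Next, the transport laws follow.
\begin{itemize}
\item \emph{Raychaudhuri law.} The general form for an affine null congruence has source $R_{kk}$. From Eq.~\eqref{eq:onejetbackground}, $R_{kk}=G_{kk}=8\pi T_{kk}/F_0$, because the $g_{\mu\nu}$ term drops on null vectors. This gives the first line of Eq.~\eqref{eq:onejettransport} with no symmetry assumption. I would then cite the standard facts that twist vanishes for hypersurface-orthogonal congruences, and that the radial spherical congruences are shear-free by isotropy.
\item \emph{Transverse source.} Inserting $F=F_0$ and $\U=\tfrac12F_0R_0$ into definition~\eqref{eq:Sxdef} gives $\Sx=F_0/r^2-8\pi T_{\ell n}-\tfrac14F_0R_0$, as claimed. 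Eq.~\eqref{eq:crossentropy} already holds for any solution, so no new computation is required.
\end{itemize}
Since the metric and matter are fixed, all of $\theta$, $\sigma$, $\omega$, $T_{kk}$ and $T_{\ell n}$ are theory-independent. The remaining $f$-dependence sits in $F_0$ and $R_0$, which are fixed by the one-jet.

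For the mass, evaluate Eq.~\eqref{eq:mscalar} at $R_0$ with $f_{RR}(R_0)=f_2$. This gives $\Msq=(F_0-R_0f_2)/(3f_2)=F_0/(3f_2)-R_0/3$. By Proposition~\ref{prop:constantF} (which needs no symmetry), this is the exact mass of the decoupled equation whenever $\delta T=0$.

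Independence of $f_2$ is shown by an explicit family, and this is the only step requiring a construction. Take
\begin{equation*}
f(R)=\tfrac12F_0R_0+F_0(R-R_0)+\tfrac12f_2(R-R_0)^2,
\end{equation*}
which realizes any $f_2\neq0$ while satisfying Eq.~\eqref{eq:onejet}. Note that $f_{RR}\neq0$ near $R_0$, so the Legendre description is legitimate there.

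Sign change of $\Msq$ then follows from two cases:
\begin{itemize}
\item if $R_0\le0$, choose $f_2>0$ to get $\Msq>0$ and $f_2<0$ to get $\Msq<0$;
\item if $R_0>0$, choose $0<f_2<F_0/R_0$ to get $\Msq>0$, and either $f_2>F_0/R_0$ or $f_2<0$ to get $\Msq<0$.
\end{itemize}

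The main obstacle is conceptual rather than computational. One must ensure that "all geometric null transport" truly involves no second jet. This holds because every transport quantity is built from the fixed metric, the matter fields, $F_0$ and $\U$. No $\Box F$ or Hessian terms survive, since $F$ is constant on the background.
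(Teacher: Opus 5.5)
Your proof follows the paper's own route. You reduce Eq.~\eqref{eq:field} at constant $F=F_0$, use $R_{kk}=G_{kk}=8\pi T_{kk}/F_0$ in the affine Raychaudhuri equation, read $\Sx$ off Eq.~\eqref{eq:Sxdef}, and realize arbitrary $f_2$ with the same quadratic family. One subcase of your sign analysis is false, however. For $R_0<0$, choosing $f_2<0$ does not by itself give $\Msq<0$. Since $\Msq=(F_0-R_0f_2)/(3f_2)$, negativity with $f_2<0$ requires $F_0-R_0f_2>0$, i.e.\ $F_0/R_0<f_2<0$. For instance, $F_0=1$, $R_0=-1$, $f_2=-2$ gives $\Msq=1/6>0$. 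The repair is immediate: restrict to $|f_2|<F_0/|R_0|$ in that subcase. Alternatively, argue as the paper does that $f_2\to0^{\pm}$ gives $\Msq\to\pm\infty$ for every $R_0$, which avoids the case split altogether.
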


\begin{proof}
{For $R=R_0$ the scalaron $F=F_0$ is constant and Eq.~\eqref{eq:field} reduces to $F_0G_{\mu\nu}=8\pi T_{\mu\nu}-\tfrac12\U_0\,g_{\mu\nu}$.  Its trace with $T=0$ gives $\U_0=\tfrac12F_0R_0$, which is the first condition in Eq.~\eqref{eq:onejet}, and the equation becomes Eq.~\eqref{eq:onejetbackground}.  Equations~\eqref{eq:onejettransport} follow from the general affine Raychaudhuri equation and, in spherical symmetry, Eq.~\eqref{eq:Sxdef}, with $\nabla_\mu F=0$.  Finally, $f(R)=\tfrac12F_0R_0+F_0(R-R_0)+\tfrac12f_2(R-R_0)^2$ realizes any $f_2$ at fixed one-jet, and Eq.~\eqref{eq:mscalar} gives Eq.~\eqref{eq:onejetmass}; taking $f_2\to0^{\pm}$ makes $\Msq\to\pm\infty$ for any $R_0$.}
\end{proof}

{Three consequences are worth recording.  (i) For $R_0=0$, $F_0=1$ and $f_2=2\alpha$ one recovers Theorem~\ref{thm:RNseparation}.  (ii) Neither Proposition~\ref{prop:constantF} nor the first part of Theorem~\ref{thm:onejet} assumes spherical symmetry.  The Kerr--Newman family is therefore likewise common to all theories with one-jet $(0,F_0)$ at $R=0$, and its scalaron is an exactly decoupled Klein--Gordon field of mass $\Msq=F_0/(3f_2)$.  There, however, $\Msq>0$ no longer guarantees exterior mode stability: a massive scalar on Kerr can admit superradiantly unstable bound states in suitable mass and rotation regimes~\cite{Myung2011rot,ShlapentokhRothman2014}.  This mechanism applies to the decoupled scalaron when its parameters lie in such a regime; it is not a claim of instability for every positive mass and every rotating background.  The non-tachyonic condition is thus not even a sufficient mode-stability criterion beyond spherical symmetry, which reinforces the layered structure of Sec.~\ref{sec:inner}.  (iii) For $R_0\neq0$ the theorem covers the Reissner--Nordstr\"om--(anti-)de Sitter family~\cite{delaCruzDombriz2009}.  In $f=R-2\Lambda+\alpha R^2$ one has $R_0=4\Lambda$ and $F_0=1+8\alpha\Lambda$, so the one-jet itself depends on $\alpha$ (the charge term of the metric is rescaled by $1/F_0$) while $\Msq=1/(6\alpha)$ as before; to vary the scalaron mass at strictly fixed transport one moves instead along the fixed-jet family used in the proof.  For $R_0>0$ this is the setting in which the regularity of the Cauchy horizon is controlled by the ratio of the exterior spectral gap to the inner surface gravity~\cite{Cardoso2018}; the scalaron adds to that problem a massive-field family of quasinormal modes whose gap depends on $f_2$ and not on any transport datum.}

\subsection{\texorpdfstring{{A running scalaron: static backgrounds with nonconstant $F$}}{A running scalaron: static backgrounds with nonconstant F}}
\label{sec:running}

{All backgrounds covered by Theorem~\ref{thm:onejet} have constant $F$, so the entropy weighting of Sec.~\ref{sec:transport} is not exercised on them.  To test the framework where it is, consider a static solution of the form~\eqref{eq:static} with $A=B=N$ and matter obeying $T_{kk}=0$ for radial null $k^\mu$, such as a radial electric field.  Then $R_{kk}=0$ and Eq.~\eqref{eq:nullfield} gives $F''=0$,}
\begin{equation}
{
F(r)=F_0+F_1r.}
\label{eq:linearF}
\end{equation}
{In ingoing Eddington--Finkelstein coordinates the gauge~\eqref{eq:crossgauge} is realized by $n=-\partial_r$ and $\ell=\partial_v+\tfrac12N\partial_r$, so that $\Jx=\tfrac12N(Fr^2)'$ and Eq.~\eqref{eq:Sxtransport} gives the purely kinematical expression}
\begin{equation}
{
\Sx=\frac{1}{2r^2}\frac{d}{dr}\left[N\frac{d(Fr^2)}{dr}\right],}
\label{eq:Sxstatic}
\end{equation}
{which reduces to $(rN)'/r^2$ for $F=1$.  At a nondegenerate Killing horizon $r_h$,}
\begin{align}
{\Sx(r_h)}
&{{}=\frac{\kappa_h}{r_h^2}\,(Fr^2)'\big|_{r_h}
=\frac{\kappa_h\left(2F_0+3F_1r_h\right)}{r_h},}
\notag\\
{\kappa_h}
&{{}\equiv\frac12N'(r_h).}
\label{eq:Sxhorizonstatic}
\end{align}
{Two features are worth noting.  First, the entropy-weighted type of a horizon is the sign of $\kappa_h$ times the sign of $(Fr^2)'$.  It agrees with the area-based type whenever the Wald area increases outward, but the two differ in the window $-F_0<F_1r_h<-\tfrac23F_0$ (for $F_0>0$), where $F>0$ and yet $(Fr^2)'<0$.  The future/past designation also changes, since $\Theta_{(n)}=-(Fr^2)'/(Fr^2)$: its sign must be checked independently of the outer/inner derivative sign.  Second, on a locally invertible branch with $F_1\neq0$ and $R'\neq0$, one has $f_{RR}=F'/R'=F_1/R'$, so that Eq.~\eqref{eq:mscalar} becomes}
\begin{equation}
{
\Msq(r)=\frac{(F_0+F_1r)\,R'(r)-F_1R(r)}{3F_1},}
\label{eq:mscrunning}
\end{equation}
{which involves $R'$ and therefore one more radial derivative of $N$ than the horizon datum~\eqref{eq:Sxhorizonstatic}.}

{An explicit solution of this class is the charged black hole with dynamical curvature of Tang, Wang and Papantonopoulos~\cite{TangWangPapantonopoulos2021}.  In our conventions it reads}
\begin{align}
{N(r)}
&{{}=\frac12-\frac{\mu}{r}+\frac{Q^2}{F_0r^2}-\frac{R_\Lambda}{12}\,r^2,}
\label{eq:TWPmetric}\\
{R(r)}
&{{}=R_\Lambda+\frac{1}{r^2},
\qquad
\mu\equiv-\frac{F_0}{3F_1},}
\notag\\
{f(R)}
&{{}=F_0R+2F_1\sqrt{R-R_\Lambda}-\frac12F_0R_\Lambda,}
\label{eq:TWPtheory}
\end{align}
{with $F$ given by Eq.~\eqref{eq:linearF} and $F_1<0$ for $\mu>0$; we have verified that Eq.~\eqref{eq:field} with the Maxwell source~\eqref{eq:Maxwellproj} is satisfied identically, and that the two expressions~\eqref{eq:Sxdef} and~\eqref{eq:Sxstatic} for $\Sx$ agree on it.  For $R_\Lambda=0$, take $F_0>0$, $\mu>0$, and $0<Q^2<F_0\mu^2/2$.  The geometry then has a solid-angle deficit rather than being asymptotically flat, two distinct positive horizons $r_\pm=\mu\pm\sqrt{\mu^2-2Q^2/F_0}$, and $F>0$ only for $0<r<3\mu$.  The centre is singular, so this is a local benchmark away from $r=0$, not a regular or globally viable model.  Equation~\eqref{eq:mscrunning} gives}
\begin{equation}
{
\Msq(r)=\frac{2\mu-r}{r^3}-\frac{R_\Lambda}{3}.}
\label{eq:TWPmass}
\end{equation}
{For $R_\Lambda=0$ the local coefficient $\Msq$ is positive at both horizons and throughout $0<r\le r_-$ and $r_-<r\le r_+$, since $r_\pm<2\mu$, and becomes negative for $2\mu<r<3\mu$: a spatial change of sign of $\Msq$ that no constant-$F$ background can display, while the horizon classification is the Reissner--Nordstr\"om one, $\Sx(r_+)>0>\Sx(r_-)$.  Moreover, since $(Fr^2)'=F_0\,r(2\mu-r)/\mu$,}
\begin{equation}
{
\Sx(r_h)=\frac{F_0}{\mu}\,\kappa_h\,r_h^2\,\Msq(r_h)
\qquad (R_\Lambda=0):}
\label{eq:TWPcorrelation}
\end{equation}
{on this particular solution the horizon transport discriminator is proportional to the product of the surface gravity and the local scalaron mass.  This is the kind of on-shell correlation anticipated in Remark~\ref{rem:scope}.  It is a property of the specific Lagrangian~\eqref{eq:TWPtheory} and not a general relation, as the case $R_\Lambda\neq0$ already shows: there $\Msq$ is shifted by $-R_\Lambda/3$ while $(Fr^2)'$ is unchanged, and the zeros of the two quantities no longer coincide.  A scalaron-active charged benchmark of this type is analyzed with the area-weighted cross-focusing law, including its inner-horizon obstruction, in Ref.~\cite{MunozPalmaHR2026}.}

{The zero of $(Fr^2)'$ at $r=2\mu$ also has a geometric meaning.  Here $F=F_0/3>0$ and, for nonzero charge, $N(2\mu)=Q^2/(4F_0\mu^2)>0$.  Thus this sphere is not a Killing horizon, although both $\Theta_{(\ell)}=\tfrac12N(Fr^2)'/(Fr^2)$ and $\Theta_{(n)}=-(Fr^2)'/(Fr^2)$ vanish.  It is a stationary point of the Einstein-frame areal radius, with no ordinary area marginality.  This simultaneous entropy marginality lies outside the future-designated classification requiring $\Theta_{(n)}<0$.  Finally, since $F$ varies, positivity of the local mass coefficient in part of this geometry does not establish mode stability of the coupled perturbations.}

\section{Inner horizons and strong-field viability}
\label{sec:inner}

The Reissner--Nordstr\"om example makes clear that an inner horizon raises several independent questions.

\subsection{Layer I: background horizon transport}

The first question is kinematical and quasi-local: does the exact background admit an inner marginal boundary, and which null channel permits it?  In the Reissner--Nordstr\"om family, Eqs.~\eqref{eq:Splus}--\eqref{eq:Sminus} show that the answer is a transverse reversal.  Longitudinal convergence is saturated throughout the electrovacuum geometry.  This statement is unchanged by the sign of $\alpha$.  {By Eq.~\eqref{eq:Sxkappa} the reversal is that of the signed surface gravity, and by Theorem~\ref{thm:onejet} the same holds for every $f$ sharing the one-jet $(0,1)$ at $R=0$.}

\subsection{Layer II: scalaron stability}

The second question asks whether the additional $f(R)$ degree of freedom is dynamically well behaved on that background.  In the quadratic family this is decided by Eq.~\eqref{eq:RNmass}: $\alpha>0$ passes the local tachyonic test and has a nonnegative exterior scalaron potential, whereas $\alpha<0$ fails before one even asks about the nonlinear fate of the Cauchy horizon.

For a general nonconstant-curvature background, Eq.~\eqref{eq:lineartrace} shows that the corresponding test is more involved.  The covariant coefficient $\Msq$ remains the local restoring mass, but metric perturbations and matter-trace perturbations couple to it.  A complete mode or initial-value analysis therefore requires the coupled characteristic system rather than only the sign of $f_{RR}$.  {Even on constant-$F$ backgrounds, where the scalaron decouples exactly, $\Msq>0$ does not guarantee exterior mode stability: rotating black holes can support superradiantly unstable massive bound states in suitable parameter regimes~\cite{Myung2011rot,ShlapentokhRothman2014}.  Strict positivity is not a universal necessary condition either.  The massless case must be distinguished, and on an asymptotically anti-de Sitter background a negative mass squared may be compatible with the Breitenlohner--Freedman bound and appropriate boundary conditions~\cite{BreitenlohnerFreedman1982}; the sign of a local coefficient alone therefore cannot replace the global spectral problem.}

\subsection{Layer III: Cauchy-horizon stability}

The third question is nonlinear and global: if a background inner horizon exists and the scalaron is locally non-tachyonic, does the Cauchy horizon survive generic perturbations?  In general relativity the answer for the ideal Reissner--Nordstr\"om Cauchy horizon is controlled by the blueshift and counterstreaming instability associated with mass inflation~\cite{PoissonIsrael1989,PoissonIsrael1990}.  Rigorous results show that the resulting extendibility problem depends on the regularity class and the precise matter model~\cite{Dafermos2003,LukOh2019}.

Metric $f(R)$ gravity does not collapse these questions into one.  Double-null simulations by Hwang, Lee, and Yeom found mass inflation in charged collapse even for $f(R)$ models selected to have cosmologically stable higher-curvature behavior~\cite{HwangLeeYeom2011}.  The companion local analysis further shows, under explicit nondegenerate-horizon and dominance assumptions, that a blueshift-nonintegrable effective longitudinal source produces Hawking-mass inflation, while leading scalaron cancellation is driven into a restricted rigidity branch unless another escape channel intervenes~\cite{LoboMunozTello2026}.  Thus passing a DK/scalaron mode-stability criterion does not by itself eliminate the independent Cauchy-horizon instability channel.  Conversely, a theory with a tachyonic scalaron introduces an additional instability that is conceptually distinct from inner-horizon blueshift.

The continued interest in mass inflation in higher-curvature regular black-hole models underscores that inner-horizon stability cannot be inferred from background regularity alone~\cite{FrolovZelnikov2026,DiFilippoEtAl2026}.  The present separation suggests a clean way to organize future analyses: first identify the background transverse transport, then establish scalaron hyperbolic/mode stability, and only then address the nonlinear Cauchy-horizon problem.

\subsection{\texorpdfstring{{The scalaron as a blueshifted source}}{The scalaron as a blueshifted source}}
\label{sec:scalarontail}

{The quadratic family makes the relation between Layers II and III concrete at the level of linear propagation on a fixed Reissner--Nordstr\"om background.  By Proposition~\ref{prop:constantF}, the scalaron obeys Eq.~\eqref{eq:RNscalarwave} to first order, in the interior as well as in the exterior.  It has an $\ell=0$ radiative degree of freedom, absent from spherical electrovacuum general relativity.  This mode can be excited by scalaron initial data or by traceful matter through Eq.~\eqref{eq:decoupled}, and can radiate in exactly spherical collapse.  Maxwell matter alone is tracefree, however, and does not force excitation of a scalaron initially at its constant background value.}

{For $\alpha>0$, let $m_{\rm sc}=1/\sqrt{6\alpha}>0$.  The massive Klein--Gordon field has an intermediate oscillatory tail $\propto t^{-(\ell+3/2)}\sin(m_{\rm sc}t)$ in the small-$m_{\rm sc}M$ regime~\cite{HodPiran1998}.  At asymptotically late times the exterior tail, under the scattering and initial-data conditions of Refs.~\cite{KoyamaTomimatsu2001a,KoyamaTomimatsu2001b,KoyamaTomimatsu2002}, takes the form}
\begin{equation}
{
\delta F\sim A\,t^{-5/6}\sin\!\big[m_{\rm sc}t+\vartheta(t)\big],}
\label{eq:massivetail}
\end{equation}
{independently of $\ell$, where $A$ is an amplitude and $\vartheta$ a slowly varying phase.  These massive tails decay more slowly than the corresponding massless Price tails for localized data~\cite{Price1972}.}

{To estimate interior focusing, we make two additional assumptions explicitly.  First, the event-horizon data have the same $v^{-5/6}$ oscillatory law in advanced time $v$.  Second, interior scattering transmits a nonzero coefficient of that law to the outgoing Cauchy-horizon branch under consideration, so that, at fixed outgoing coordinate $u$,}
\begin{equation}
{
\delta F(u,v)\sim A_-(u)\,v^{-5/6}
\sin\!\big[m_{\rm sc}v+\vartheta_-(u,v)\big],}
\label{eq:transmittedtail}
\end{equation}
{with $A_-(u)\neq0$ and $\partial_v\vartheta_-=o(m_{\rm sc})$.  These are hypotheses for the estimate, not a propagation theorem derived here.  The asymptotic law is understood with derivative control sufficient to differentiate its leading term.}

{Keep the signed convention of Eq.~\eqref{eq:Sxkappa}, so $\kappa_-<0$ and the blueshift rate is $|\kappa_-|=(r_+-r_-)/(2r_-^2)$.  Along outgoing null curves approaching the nondegenerate Cauchy horizon, affine parametrization gives $k\simeq C(u)e^{|\kappa_-|v}\partial_v$ and $d\lambda\simeq C(u)^{-1}e^{-|\kappa_-|v}dv$, where $C(u)>0$ fixes the affine normalization.  The explicitly quadratic term in Eq.~\eqref{eq:entropyfocus}, evaluated using the linear field and $F=1+\delta F$, therefore has the leading fixed-background estimate}
\begin{equation}
{
\big[k(\delta F)\big]^2\sim C^2 A_-^2m_{\rm sc}^2
e^{2|\kappa_-|v}v^{-5/3}\cos^2\!\Phi_-,}
\label{eq:scalaronsource}
\end{equation}
{where $\Phi_-=m_{\rm sc}v+\vartheta_-(u,v)$.  Its affine integral diverges: the squared oscillation is nonnegative and has a nonzero mean, while the remaining weight grows as $e^{|\kappa_-|v}v^{-5/3}$.  This is a conditional nonintegrability result for a contribution to generalized entropy focusing.  It uses a first-order field to estimate a second-order source; it is not a self-consistent second-order or nonlinear evolution.}

{The source must also be distinguished from the effective Jordan-frame longitudinal projection in the Hawking-mass transport law of Ref.~\cite{LoboMunozTello2026}, which contains the scalaron Hessian.  Applying that mass-inflation theorem requires checking its eventual source sign, longitudinal dominance, counterstreaming, and mixed-channel assumptions.  A positive squared kinetic term does not by itself verify them.  Massive-field interior results also distinguish derivative-energy blow-up from field-amplitude divergence and metric extendibility, with a decisive role for horizon oscillations~\cite{VanDeMoortel2018,KehleVanDeMoortel2024}.  Those Einstein--Maxwell--Klein--Gordon results guide the present discussion but are not nonlinear theorems for the full scalaron potential.  The robust conclusion here is that a locally non-tachyonic scalaron can supply an additional blueshift-amplified focusing contribution when it is excited and transmitted.  Its nonlinear consequences, including Hawking-mass inflation and the regularity class of a possible extension, remain questions for the coupled evolution.}

\subsection{Comparing instability channels}
\label{sec:channels}

{On the stationary asymptotically flat Reissner--Nordstr\"om background, a spectral growth rate $\Omega$ and the inner surface gravity $\kappa_-$ refer to the same Killing time, normalized at infinity.  Their ratio is invariant under a common constant rescaling of this time.  For a separated growing spectral component $\propto e^{\Omega t}$ that is regular at the future event horizon, the ingoing dependence there is $e^{\Omega v}$.  If its transmission coefficient to the ingoing Cauchy-horizon component is nonzero, the regular Kruskal coordinate $V_-=-e^{-|\kappa_-|v}$ gives the dependence $(-V_-)^{-\Omega/|\kappa_-|}$.  Accordingly,}
\begin{equation}
{
\mathfrak b\equiv\frac{\Omega}{|\kappa_-|}
\le\mathfrak b_{\max}
=\frac{2r_-^2}{(r_+-r_-)\sqrt{6|\alpha|}}.}
\label{eq:bratio}
\end{equation}
{This counts tachyonic $e$-folds per blueshift $e$-fold for that transmitted component.  In the continuous-spectrum setting of Sec.~\ref{sec:RN}, it is a componentwise estimate, not a claim of a discrete finite-energy growing eigenmode or a universal wave-packet asymptotic law.  It is analogous, with the opposite sign, to the decay-to-blueshift ratio $\beta$ used in Cauchy-horizon regularity analyses~\cite{Cardoso2018}: an exponentially decaying transmitted component behaves as $(-V_-)^\beta$, while a growing one behaves as $(-V_-)^{-\mathfrak b}$.  Under the stated transmission assumption the latter diverges at the level of the field.  For $\alpha>0$ in asymptotically flat spacetime, the late-time input instead has power-law decay and no positive exponential spectral gap, as in Sec.~\ref{sec:scalarontail}.}

{Neither ratio orders the nonlinear onset of the two mechanisms.  Perturbation amplitudes, propagation, and backreaction matter, and once the evolving geometry loses stationarity there is no common Killing clock.  A coupled characteristic evolution must compare the scalaron amplitude, blueshifted fluxes, and quasi-local mass with respect to the same double-null variables.}

\subsection{Strong-field viability filters and model-building implications}
\label{sec:implications}

The analysis suggests that viability in strong-field $f(R)$ gravity should be treated as a hierarchy of filters rather than a single sign condition.

{Table~\ref{tab:filters} summarizes the four complementary filters in the order used above.  Layer 0 requires a regular positive coupling, $F>0$.  Layer I identifies the longitudinal generalized entropy expansion and the transverse derivative type of entropy-marginal spheres.  Layer II analyzes the local coefficient $\Msq$ and the global coupled perturbation problem.  Layer III addresses nonlinear mass growth and Cauchy-horizon extendibility in a specified regularity class.  Passing any one filter does not establish the others; in particular, generalized entropy focusing is distinct from ordinary area focusing.}

Several consequences follow:
First, the traditional DK requirement $f_{RR}>0$ remains indispensable in the high-curvature GR-like regime, but the covariant local mass parameter~\eqref{eq:mscalar} is the more general local criterion.  A model may satisfy $f_{RR}>0$ and still have $\Msq<0$ on a sufficiently curved background.  Conversely, $F>0$ controls coupling and entropy regularity but contains no information about the second derivative of the Lagrangian.

Second, chameleon screening and strong-field stellar viability should not be conflated with null-transport conditions.  The construction of relativistic stars in viable $f(R)$ models relies on nonlinear scalar dynamics and the matter trace~\cite{UpadhyeHu2009,BabichevLanglois2010}; curvature singularities and their possible cure by higher-curvature terms represent yet another aspect of the scalaron potential~\cite{Frolov2008,KobayashiMaeda2009}.  These questions are naturally expressed in derivatives of $\U$, while entropy cross transport depends directly on $\U$ itself.

Third, dynamical environments can excite the scalaron even in the absence of a strictly tachyonic mass.  Contracting systems can drive high-frequency curvature oscillations~\cite{ArbuzovaDolgovReverberi2013}, and binary neutron-star simulations show that finite-range scalar degrees of freedom can alter inspiral and merger dynamics~\cite{SagunskiEtAl2018}.  A complete strong-field viability analysis should therefore combine the instantaneous mass criterion with adiabaticity and characteristic propagation.  Along a timelike observer $u^\mu$, a useful local measure is
\begin{equation}
\epsilon_{\rm ad}
\equiv
\frac{|u(m_{\rm sc})|}{m_{\rm sc}^2},
\label{eq:adiabatic}
\end{equation}
{when $m_{\rm sc}^2>0$, with $m_{\rm sc}=\sqrt{m_{\rm sc}^2}$ and $u(m_{\rm sc})=u^\mu\nabla_\mu m_{\rm sc}$.  Here $u^\mu$ is a unit timelike vector.  Small $\epsilon_{\rm ad}$ indicates slow variation of the local mass on its inverse-mass time scale; it does not alone establish adiabatic tracking of an evolving equilibrium.  Matter forcing, spatial gradients, metric evolution, and the physical mode frequency must also be controlled.  A value of order unity or larger flags possible nonadiabatic excitation and supplies a useful diagnostic for numerical relativity.}

Finally, the potential-jet hierarchy provides a compact model-building lesson.  A viable theory must control not only the value of $\U$, which affects algebraic curvature and entropy transport, but also its first and second derivatives, which govern the background curvature relation and scalaron response.  This is one reason why adding a positive high-curvature $R^2$ term can cure certain scalaron pathologies without trivially determining the causal structure of the resulting compact objects~\cite{Starobinsky2007,KobayashiMaeda2009}.

\section{Discussion and conclusions}
\label{sec:conclusions}

{The original Dolgov--Kawasaki calculation asked whether a nominally GR-like matter configuration remains close to its general-relativistic curvature.  Its significance was to show that a modification intended for cosmological curvature can nevertheless cause a rapid local instability in ordinary matter~\cite{DolgovKawasaki2003}.  The covariant formulation developed here retains that physical question while separating the scalaron response from the geometric transport of a chosen background.  Equation~\eqref{eq:lineartrace} is exact at first order on an arbitrary background: $\Msq$ is the local restoring coefficient, while varying $F$ introduces metric mixing and matter-trace perturbations provide forcing.  On constant-$F$ backgrounds the scalaron equation decouples in one direction, and its spherical reduction~\eqref{eq:characteristic} supplies a characteristic propagation equation.}

{The background entropy identities~\eqref{eq:entropyfocus} and~\eqref{eq:crossentropy} cancel the scalaron Hessian.  The Einstein-frame dictionary explains the distinction: longitudinal focusing samples scalar kinetic flux, while spherical cross-focusing samples the potential value; neither directly measures its curvature in field space.  This does not imply dynamical independence of scalaron and metric perturbations.  It means that these particular background transport data cannot determine the scalaron mass sign over theory space.  The cosmological check in Appendix~\ref{app:FLRW} confirms the same distinction on a background with time-dependent $F$.}

{The sharpest discriminator is quadratic electrovacuum gravity.  The same nonextremal Reissner--Nordstr\"om spacetime, with the same complete background transport and outer--inner horizon pair, solves $f(R)=R+\alpha R^2$ for every $\alpha$.  For $\alpha\neq0$, its scalaron mass is $1/(6\alpha)$: the positive branch has a nonnegative exterior scalaron potential, whereas the negative branch has a tachyonic spectral band.  Theorem~\ref{thm:onejet} identifies the underlying freedom on all constant-curvature, tracefree-matter backgrounds: the embedding and background transport are fixed by $(f,f_R)$, but $f_{RR}$ can vary independently.  Rotating examples further show that positive mass does not exclude superradiant mode instability in suitable regimes.}

{The running-scalaron benchmark adds a complementary lesson.  Within a particular theory, the background equations can correlate the mass coefficient and horizon transport, as in Eq.~\eqref{eq:TWPcorrelation}.  Such a relation does not contradict separation over theory space.  It also reveals why entropy marginality must be distinguished from ordinary area marginality: the sphere $r=2\mu$ has two vanishing generalized expansions without being a Killing horizon.  Since the benchmark has a singular centre and loses positive coupling at finite radius, it tests local identities rather than providing a globally viable compact-object model.}

{The scalaron can also connect the stability layers through propagation.  Assuming an excited massive tail with nonzero interior transmission and the stated derivative asymptotics, Eq.~\eqref{eq:scalaronsource} gives a divergent affine integral of a quadratic contribution to generalized entropy focusing on the fixed background.  This is a conditional estimate, not a nonlinear mass-inflation theorem.  Identifying it with the effective source of the companion Hawking-mass transport law requires checking that law's sign, dominance, and counterstreaming hypotheses~\cite{LoboMunozTello2026}.  Field amplitude, derivative energy, Hawking mass, and metric extendibility must be tracked separately, especially for oscillatory massive tails~\cite{VanDeMoortel2018,KehleVanDeMoortel2024}.  Likewise, the stationary rate ratio~\eqref{eq:bratio} compares transmitted spectral components, not the nonlinear onset of two competing mechanisms.}

{A decisive next step is a fully coupled double-null evolution of a model whose scalaron coefficient is positive outside the black hole but becomes negative only in a high-curvature interior.  The present running benchmark realizes a spatial sign change with the negative region outside the horizons, so it does not yet supply that model.  For varying $F$, the characteristic scalar equation must retain the metric-mixing term of Eq.~\eqref{eq:lineartrace}; the full nonlinear problem additionally requires the gravitational constraints, remaining evolution equations, and matter dynamics.  Monitoring $\Sx$, $\Theta_{(\ell,n)}$, $\Msq$, both physical and effective null fluxes, and the Hawking/Misner--Sharp mass would determine whether scalaron growth enhances, delays, or suppresses a mass-inflation regime.}

{Relativistic stellar collapse and neutron-star mergers provide another application.  For an isotropic perfect fluid the trace $T=-\rho+3p$, with energy density $\rho$ and pressure $p$, can change rapidly across shocks and phase transitions.  Tracking the local mass, matter forcing, and the qualified adiabaticity diagnostic~\eqref{eq:adiabatic} would extend the original matter-instability test to these environments; existing finite-range scalar simulations establish the feasibility of related calculations~\cite{SagunskiEtAl2018}.  A further extension to general scalar--tensor or higher-curvature theories must derive transport and perturbation equations from their own kinetic and potential structure, rather than assume $\U_{,FF}=1/f_{RR}$.  The layered framework in Table~\ref{tab:filters} organizes these questions while keeping local response, background transport, and nonlinear inner-horizon stability distinct.}

\begin{acknowledgments}
FSNL acknowledges support from the Funda\c{c}\~ao para a Ci\^encia e a Tecnologia (FCT) research grant UID/04434/2025.
\end{acknowledgments}

\appendix
\section{\texorpdfstring{{Cosmological check of the transport identities}}{Cosmological check of the transport identities}}
\label{app:FLRW}

{As an independent check on a background with time-dependent $F$, consider spatially flat FLRW,}
\begin{equation}
{
ds^2=-dt^2+a^2(t)\left(d\chi^2+\chi^2d\Omega^2\right),}
\label{eq:FLRW}
\end{equation}
{with areal radius $r=a\chi$, $H=\dot a/a$, and a perfect fluid $(\rho,p)$ comoving with $u=\partial_t$.  The gauge~\eqref{eq:crossgauge} is realized by}
\begin{equation}
{
n=\frac1a\left(\partial_t-\frac1a\partial_\chi\right),
\qquad
\ell=\frac a2\left(\partial_t+\frac1a\partial_\chi\right),}
\label{eq:FLRWnull}
\end{equation}
{and $k=a^{-1}(\partial_t+a^{-1}\partial_\chi)$ is an affinely parametrized outgoing null vector.  Then $\theta_{(k)}=(2/a)(H+1/r)$, $k(F)=\dot F/a$ and $T_{kk}=(\rho+p)/a^2$, and Eq.~\eqref{eq:entropyfocus} reduces to}
\begin{equation}
{
-2F\dot H=8\pi(\rho+p)+\ddot F-H\dot F.}
\label{eq:FLRWray}
\end{equation}
{For the cross transport, $\Jx=\tfrac12ar^2(\dot F+2HF)+arF$ and $T_{\ell n}=\tfrac12(\rho-p)$, and Eq.~\eqref{eq:crossentropy} reduces to}
\begin{equation}
{
\ddot F+5H\dot F+2F\big(\dot H+3H^2\big)=8\pi(\rho-p)+\U(F),}
\label{eq:FLRWcross}
\end{equation}
{which is Eq.~\eqref{eq:FLRWray} plus twice the Friedmann constraint $3FH^2=8\pi\rho+\tfrac12\U-3H\dot F$.  Both are the standard cosmological equations of metric $f(R)$ gravity~\cite{SotiriouFaraoni2010,DeFeliceTsujikawa2010}.  Together with $R=6(2H^2+\dot H)$ they imply the background trace equation~\eqref{eq:traceU}, as they must, but neither contains $\U_{,FF}$.  The scalaron response appears only upon perturbing.  At an expanding de Sitter fixed point, $H>0$ and $F$ is constant.  Proposition~\ref{prop:constantF} applies, and a homogeneous perturbation obeys}
\begin{equation}
{
\delta\ddot F+3H\,\delta\dot F+\Msq\,\delta F=-\frac{8\pi}{3}\,\delta T,}
\label{eq:FLRWresponse}
\end{equation}
{For the unforced equation, $\delta T=0$, the characteristic exponents satisfy $s^2+3Hs+\Msq=0$.  Thus $\Msq>0$, equivalently $F/f_{RR}>R$, gives decay of both homogeneous modes, whereas $\Msq<0$ gives a growing mode.  At $\Msq=0$ there is a constant mode and a decaying mode: the point is marginal in this linear test, not an asymptotic attractor.  Accordingly, $\Msq\ge0$ excludes exponential homogeneous growth, while strict positivity gives linear asymptotic attraction in this scalar sector~\cite{AmendolaEtAl2007,SotiriouFaraoni2010,DeFeliceTsujikawa2010}.  Matter dynamics must still be included if $\delta T$ is not prescribed independently.  The cosmological background thus displays the same division of labor as the black-hole examples: Eqs.~\eqref{eq:entropyfocus} and~\eqref{eq:crossentropy} are background equations, whereas Eq.~\eqref{eq:lineartrace} is the response.}


\end{document}